\documentclass[sigconf]{aamas}

\usepackage{balance} 

\doi{CZFV1739}

\makeatletter
\gdef\@copyrightpermission{
  \begin{minipage}{0.2\columnwidth}
   \href{https://creativecommons.org/licenses/by/4.0/}{\includegraphics[width=0.90\textwidth]{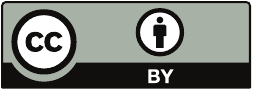}}
  \end{minipage}\hfill
  \begin{minipage}{0.8\columnwidth}
   \href{https://creativecommons.org/licenses/by/4.0/}{This work is licensed under a Creative Commons Attribution International 4.0 License.}
  \end{minipage}
  \vspace{5pt}
}
\makeatother

\setcopyright{ifaamas}
\acmConference[AAMAS '26]{Proc.\@ of the 25th International Conference
on Autonomous Agents and Multiagent Systems (AAMAS 2026)}{May 25 -- 29, 2026}
{Paphos, Cyprus}{C.~Amato, L.~Dennis, V.~Mascardi, J.~Thangarajah (eds.)}
\copyrightyear{2026}
\acmYear{2026}
\acmDOI{}
\acmPrice{}
\acmISBN{}

\title[Truthful Reporting of Competence]{Truthful Reporting of Competence with Minimal Verification}

\author{Reshef Meir}
\affiliation{
  \institution{Technion---Israel Institute of Technology}
  \city{Haifa}
  \country{Israel}}
    \orcid{0000-0003-0961-3965}
\email{reshefm@technion.ac.il}

\author{Jonathan Wagner}
\affiliation{
   \institution{Technion---Israel Institute of Technology}
  \city{Haifa}
  \country{Israel}}
    \orcid{0009-0001-9022-7967}
\email{swagner@campus.technion.ac.il}

\author{Omer Ben{-}Porat}
\affiliation{
   \institution{Technion---Israel Institute of Technology}
  \city{Haifa}
  \country{Israel}}
    \orcid{0000-0002-2716-086X}
\email{omerbp@technion.ac.il}

\begin{abstract}
Suppose you run a home exam, where students should report their own scores but can cheat freely. You can, if needed, call a limited number of students to class and verify their actual performance against their reported score. We consider the class of mechanisms where truthful reporting is a dominant strategy, and truthful agents are never penalized---even off-equilibrium. 

How many students do we need to verify, in expectation, if we want to minimize the bias, i.e., the difference between agents' competence and their expected grade? When perfect verification is available, we characterize the best possible tradeoff between these requirements and provide a simple parametrized mechanism that is optimal in the class for any distribution of agents' types. When verification is noisy, the task becomes much more challenging. We show how proper scoring rules can be leveraged in different ways to construct truthful mechanisms with a good (though not necessarily optimal) tradeoff. 
\end{abstract}

\keywords{Game theory; mechanism design; auditing}

\usepackage{subcaption}
\usepackage{balance} 
\usepackage{xspace}
\usepackage{enumitem}
\newenvironment{rtheorem}[1]{\medskip\noindent\textbf{Theorem~\ref{#1}. }\begin{itshape}}{\end{itshape}\medskip}
\newenvironment{rproposition}[1]{\medskip\noindent\textbf{Proposition~\ref{#1}. }\begin{itshape}}{\end{itshape}\medskip}
\newenvironment{rlemma}[1]{\medskip\noindent\textbf{Lemma~\ref{#1}. }\begin{itshape}}{\end{itshape}\medskip}
\usepackage{algorithm}
\usepackage[noend]{algorithmic}
\usepackage{amsmath}
\usepackage{amsthm}
\usepackage{amsfonts}
\usepackage{enumitem}
\usepackage{multirow}
\usepackage{xcolor}
\usepackage{mathtools}
\usepackage{tikz}
\usetikzlibrary{plotmarks}
\usepackage{pgfplots}
\usetikzlibrary{calc}
\usepackage{bm}
\usepackage{placeins}
\def\M{\mathcal{M}}

\def\HM{\mathcal{HM}}
\def\AM{\mathcal{AM}}

\def\bias{\text{bias}}
\def\ver{\text{ver}}
\def\Bias{\textsc{BIAS}}
\def\Ver{\textsc{VER}}
\def\BS{\textsc{\textbf{B}}}

\usepackage{newfloat}
\floatname{algorithm}{Mechanism}
\usepackage{listings}
\DeclareCaptionStyle{ruled}{labelfont=normalfont,labelsep=colon,strut=off} 
\floatstyle{ruled}
\newfloat{listing}{tb}{lst}{}
\floatname{listing}{Listing}

\newcount\Comments  
\newcommand{\kibitz}[2]{\ifnum\Comments=1{\color{#1}{#2}}\fi}
\newcommand{\rmr}[1]{\kibitz{red}{[RESHEF:#1]}}
\newcommand{\obp}[1]{\kibitz{blue}{[OMER:#1]}}

\newcommand{\jwr}[1]{\kibitz{orange}{[Jonny:#1]}}
\newcommand{\HR}[1]{\textbf{HR#1}}

\newcount\Full  
\newcommand{\full}[2]{\ifnum\Full=1{#1}\else{#2}\fi}

\def\set{\leftarrow}

\def\ol{\overline}
\def\eps{\varepsilon}

\def\calT{\mathcal{T}}

\newcommand{\verall}{\textsc{Verify-All}\xspace}
\newcommand{\hgp}{\textsc{Huge-Penalty}\xspace}
\newcommand{\mcv}{\textsc{MCV}\xspace}
\newcommand{\asmcv}{\textsc{AS-MCV}\xspace}
\newcommand{\payall}{\textsc{Pay-All}\xspace}

\def\bparam{\beta}
\def\lin{\theta}
\def\lv{\mathcal{LVM}}
\def\PM{\mathcal{PVM}}
\def\LV{\lv}

\def\ex{\kappa}

\begin{document}


\pagestyle{fancy}
\fancyhead{}


\maketitle 

\section{Introduction}
We consider a scenario in which self-interested agents need to report their own performance, or score in a given task, such as a home exam,  earnings, recycling, etc.

More precisely, each agent reports her own performance or competence (=expected score), and should be graded or compensated accordingly. However, agents always prefer \emph{higher grades} and thus have an incentive to misreport. 

The principal has limited verification capabilities: it can select some of the agents and observe their score directly by retaking the exam under supervision, auditing their financial status, tracking their consumption or recycling behavior, etc. The final grades can be based on both \emph{reported} and \emph{observed} scores, and thus agents that score lower than they reported could be punished. This  abstraction of verification mechanisms is common in the economic literature when the underlying problem varies from allocation of one or more goods~\cite{ben2014optimal,chua2023optimal}, to incentivizing certain actions~\cite{mookherjee1989optimal}, or revealing the value for the principal~\cite{li2020mechanism}. Intuitively, we would like a mechanism where truthful reports are a dominant strategy, as is standard both in the economics and mechanism design literature.

However, we focus on a design goal that has not been studied to the best of our knowledge, despite being very natural:  specifically, the principal would like to assign the agent(s) their true competence as their grade. This alone could be trivially obtained by verifying everyone, by threatening cheaters with an excessive penalty, or by rewarding truth-tellers. However, we would like to audit as few agents as possible, and excessive penalties may not be allowed. 

Moreover, some agents may misreport even in a truthful mechanism.  This can happen if agents are irrational~\cite{kahneman2003maps}, have unintentional mistakes in their reports~\cite{selten1988reexamination}, cannot see why the mechanism is truthful~\cite{li2017obviously, ashlagi2018stable}, cannot verify the behavior of the mechanism~\cite{branzei2015verifiably}, etc. We therefore require that a truthful agent is never punished (i.e., graded lower than her competence) regardless of others' reports.

 In many situations, there is public information on the distribution of agents' types, that the principal can use. See   Fig.~\ref{fig:FICO} and Fig.~\ref{fig:tax}. Ideally, however, the properties of the mechanism (e.g., truthfulness) should not critically depend on access to the distribution, as it may be inaccurate.



 \begin{figure}
     \centering
\includegraphics[width=0.9\linewidth]{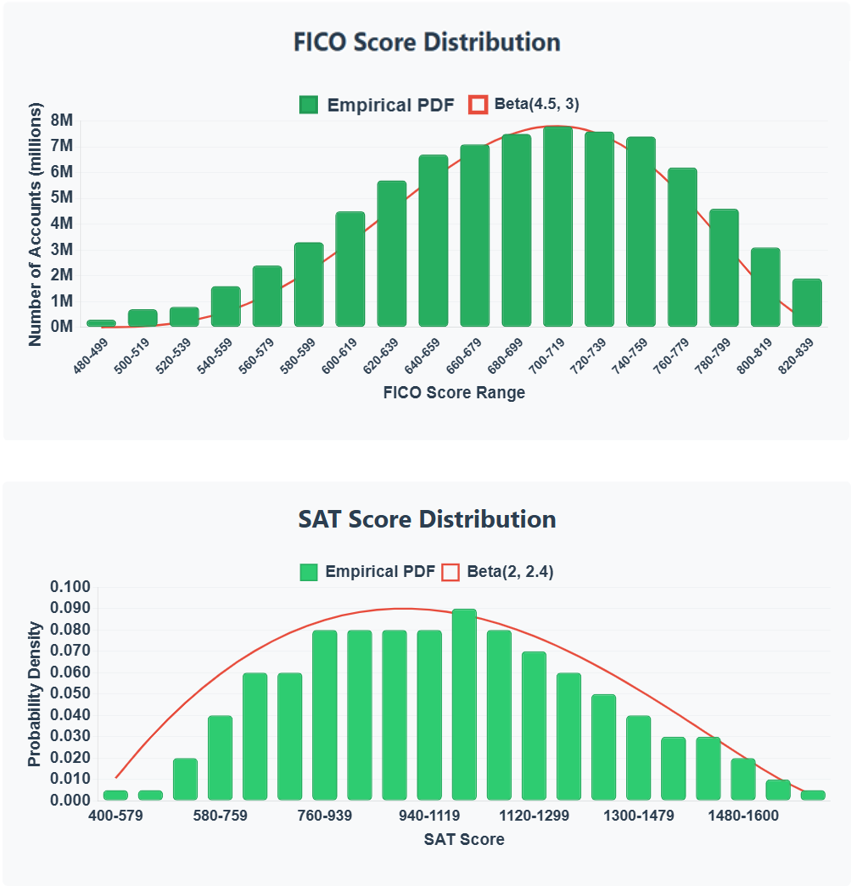}
     \caption{Top: Typical distribution of credit score~\cite{FICO}.  Bottom: Typical distribution of SAT scores~\cite{wikipediaSAT}.}
     \label{fig:FICO}
 \end{figure}

\subsection{Related Work}
\rmr{AAAI'26: Optimally Auditing Adversarial Agents}
There is much work on incentivizing self-interested agents to report private information. 
We can distinguish these settings according to the main reason agents may want to lie. 

In the first type of setting, agents do not really care how the principal uses the reported information but may lie because they do not have it or because obtaining the correct information  requires effort. Common solutions in the single agent case include the use of scoring rules~\cite{savage1971elicitation,zohar2008mechanisms} or more complicated contracts that assume some ability to verify the outcome~\cite{carroll2019robust}. In the multi-agent case, there are many variants of mechanisms built on cross-validation, such as Bayesian Truth Serum~\cite{prelec2004bayesian,witkowski2012robust}.

In contrast, the mechanism design literature that is closer to our motivation deals with situations where the agents have a direct interest in the outcome, which may be a common alternative, assignment of resources, and so on. We can further split problems into those that allow monetary transfers (which often make it easy to align incentives, despite possible computational hardness~\cite{ben2024principal}); and those that are not~\cite{nisan2007introduction}. In the latter type, negative results are abundant and successful mechanisms are only possible when the structure of the problem can be exploited. 

Our problem relates to the second type, and moreover, agents have very clear incentives to inflate their reported scores. The closest line of work within this framework is probably \emph{strategic classification}, where agents report information and would like it to be classified in a certain way~\cite{meir2012algorithms,hardt2016strategic,zrnic2021leads,levanon2021strategic}. For example, \citet{hardt2016strategic} consider a designer (`jury') that sets up the classification algorithm. An agent ('contestant') can try and game the input, at a cost, in an attempt to improve her outcome (say, increase her chances for a loan). In our model, agents can lie freely, but might be caught and punished. 

Some papers also combine payments/penalties and verification. E.g. in \cite{ceppi2019partial} partial verification is used to reduce payments; and other works consider payments/penalties contingent on some value realization or action~\cite{epitropou2019optimal,porter2008fault} (implying it can be freely verified).  \citet{mylovanov2017optimal} highlight the issue of \emph{limited liability}, as the penalty is exogenously bounded to a fraction of the value of the allocated good. 

Closer to our motivation, \citet{akasiadis2017cooperative} use scoring rules to induce truthful reports of electricity consumption shifting costs. However, in these works, all values are eventually revealed without the principal taking an explicit action. 

\rmr{see paper in folder about auditing with adversarial agents, and references therein}
\paragraph{Mechanism design with verification}
In the classical principal-agent model with verification~\cite{green1986partially}, the principal has no ability to reveal the true state, but can a-priori restrict the allowed responses of the agent, which is irrelevant in our case. 
In a recent follow-up work by Ball and Kattwinkel~\cite{ball2025probabilistic_verification}, they add \emph{tests} that the principal can assign to the agent following her report, similarly to our noisy verification model. The test is costless, and the challenge is to match each report with the best test.

Other works use costly verification in a very similar way to us, when the problem is to allocate item(s) to the agents with the highest private value~ \cite{ben2014optimal,chua2023optimal}, 
or eligibility based on private features~\cite{estornell2021incentivizing}, where value/features are sampled independently from \emph{known} individual distributions. \rmr{in our setting we can think of that as allowing a limited number of agents to pass, as grades are in $\{0,1\}$)} 
Li studies a similar assignment problem, but where the agents privately know the value of the item to the principal~\cite{li2020mechanism}. Our setting is most similar to \cite{mookherjee1989optimal}, where agents' utility is a function of their reported type and the penalties, but the principal's goals are very different. 





\subsection{Our Contribution}
In the next section, we define the class of elicitation mechanisms and our formal requirements. Section~\ref{sec:det} contains our main result, which is the single-parameter subclass of Monotone-Cutoff Verification (MCV) mechanisms. MCV provide optimal tradeoff between minimal verification and minimal bias assuming deterministic verification.  When the type distribution is known, one can choose the parameter to bound either the bias or the verification cost. When it is not known, we show how  near-optimal bounds can still be guaranteed in a multiagent setting.  In  Section~\ref{sec:noisy} we provide three mechanisms that only assume noisy verification. All mechanisms leverage \emph{proper scoring rules} albeit in different ways.  The last section shows empirically that  our mechanisms   provide a good tradeoff of bias vs. verification on common distributions. 

\full{Some proofs are deferred to the appendices to allow for continuous reading}{All missing proofs are included in the full version on arxiv~\cite{?}}

\section{Setting}
There are $N$ agents, each one with a type $t_i$, which is an element in the set $\mathcal T$, where $\mathcal{T} \subseteq [0,1]$. The type, which is private information, measures the competence of agent $i$, where better agents have a higher type. Agents report their types to a central authority that we call the \emph{principal}. We denote the report of agent $i$ by $\hat t_i\in\calT$. The principal wishes to grade agents, where the grades should be as close as possible to the real types. Agents are selfish and wish to maximize their grade, as explained in the introduction\obp{to add an example}. Thus, agents will potentially lie to maximize their expected grade.

\rmr{check out Rose et al. ’12}

\rmr{A potential addition would be a direct comparison with audit mechanisms under limited budget (e.g., Emek et al., AAMAS 2012; Daskalakis et al., EC 2020), which study related verification tradeoffs but under different incentive assumptions.}
\paragraph{Verification}
The principal can verify a particular agent's type, e.g., by forcing them to re-take an exam in class,  audit economic reports, or track individual consumption. We assume the principal has access to a \emph{verification process}, which returns $t_i$ in the deterministic case. In the noisy verification case, the process returns  $s_i\sim T_i$ where $T_i$ in an arbitrary distribution on $\calT$ with mean $t_i$. Other than the expectation, we make no assumptions on the distribution $T_i$. In particular, it may be correlated with other agents, and need not be known to the designer. 




\paragraph{Mechanisms}
A mechanism $\M=(\M_V,\M_G)$ consists of two different functions:
\begin{enumerate}
    \item Verification selection function $\M_V:\calT\rightarrow[0,1]$. This function takes as input an agent's reported types $\hat t_j$ and possibly other available information (see below), and returns an \emph{audit probability} $q_i\in[0,1]$ for every $i \in N$: the probability that $i$ will be selected for verification. Since all agents with same reported type $t$ have the same $q_i$, we also denote $q_t$;
    \item Grading function $\M_G:\calT\times(\calT\cup\{\bot\})\rightarrow \mathbb R$. This function takes as input a reported type $\hat t_i$ and a realized performance sample $s_i$ (or $\bot$ if $i$ is not verified) and sets a grade $g_i$. \rmr{We could, in principle, also rely on information from other agents in setting $i$'s grade, but this is not used in the current paper. It is allowed in \cite{chua2023optimal} as they let the allocation depend on all reports and all verification outcomes. Their mechanism only needs to set an allocation probability for each non-verified agent.} 
  The grade is not necessarily in the set $\calT$ and could e.g. be negative unless specified otherwise. \jwr{A grade $g_i < \inf (\calT)$  or $g_i > \sup (\calT)$ corresponds to  an excess penalty or reward, respectively. I see a problem here..}   
\end{enumerate}
The functions above may rely on external parameters or additional information, that the mechanism takes as input. 

\begin{algorithm}[t!]
\caption{A general elicitation mechanism}
\label{alg:algorithm}
\begin{algorithmic}[1] 
\REQUIRE $(\hat t_i)_{i\in N}, \M_V, \M_G$,[Optional: Information $I$]
\FORALL{$i\in N$}
\STATE $q_i  \gets \M_V(\hat t_i)$ \hfill \textit{//audit probability}
\ENDFOR
\FORALL{$i\in N$}
\STATE with probability $1-q_i$, set  $g_i\leftarrow \M_G(\hat t_i,\bot)$;
\STATE with the remaining probability, obtain a verified score $s_i\sim T_i$ and set grade $g_i\leftarrow \M_G(\hat t_i,s_i)$.
\ENDFOR
\STATE \textbf{return} $(g_i)_{i\in N}$
\end{algorithmic}
\end{algorithm}

To specify a concrete mechanism, we only need to formally define $\M_V$ and $\M_G$ (see Mechanism~\ref{alg:algorithm}). Three simple examples are:
\begin{description}
    \item[$\verall$:] $\M^{VA}_V(\hat t_i)=1\ \forall \hat t_i$; and grades such that\\
    $\M^{VA}_G(\hat t_i,s_i) = \hat t_i$ if $\hat t_i=s_i$, and $0$ otherwise.
    \item[$\payall$:] $\M^{PA}$ just pays 1 to all agents without verification, i.e., $\M^{PA}_V\equiv 0,\M^{PA}_G\equiv 1$. 
    \item[$\hgp$:]$\M^{HP}_V(\hat t_i)= \eps\ \forall \hat t_i$  for some small  $\eps>0$; and grades each  such that  
        \(
    \M^{HP}_G(\hat t_i,s_i) := 
    \begin{cases}
        \hat t_i & \text{if } \hat t_i=s_i \text{ (or $\bot$)}\\
        -2/\eps & \text{if }\hat t_i \neq s_i \\
    \end{cases}
    \)
\end{description}
\newcommand{\hatb}[1]{\hat{\bm{#1}}}
Note that once we fix the mechanism $\M$ and the profile $\hatb t$, the expected grade of an agent can be determined. 
Let
\begin{equation}
    \label{eq:g}
    \bar g_t(\hat t) := \M_V(\hat t) E_{s_i\sim T_i}[\M_G(\hat t_i,s_i)]+(1-\M_V(\hat t))\M_G(\hat t_i,\bot) 
\end{equation}
be the expected grade of a type~$t$ agent reporting $\hat t$, and denote by $\bar g_t:=\bar g_t(t)$ the expected grade of a truthful type-$t$ agent.

\paragraph{Hard requirements}

\begin{enumerate}[leftmargin=0.4cm,itemindent=.75cm,label=\textbf{HR\arabic*}.]
    \item Truth-telling must be a weakly dominant-strategy; \rmr{define?}
    \item Truth-tellers are never punished. I.e.,  $g_{i} \geq t_i$ for all $t_i\in \calT$ even if others lie;
      \item Realized grades must be at least $-\xi$, where $\xi\geq 0$ is a parameter.\footnote{We could similarly add a requirement about the \emph{maximal} realized grade (e.g. a course grade may be capped at `100' by the system). For simplicity we avoid this restriction, but note that most of our mechanisms indeed keep realized grades at most 1.} 
\end{enumerate}
The special case of $\xi=0$ is called \emph{limited liability} and means we cannot `punish' cheaters, beyond disqualifying their reward.

In the \emph{deterministic verification} case, a mechanism satisfying \HR1-\HR3 is called \emph{valid} (we later discuss relaxations for the noisy setting). Clearly if we only care about validity, it could be obtained trivially, by either one of the $\verall$ or $\payall$ mechanisms. $\hgp$ may or may not be valid, depending on $\xi$.

\paragraph{Soft requirements and tradeoffs}
Ideally, we would like to minimize both the amount of verified agents, and the grading error $|g_i-t_i|$. Since the grade may be affected by random factors, a more modest goal would be to minimize the \emph{bias} $|\bar g_t-t|$ of every type $t$. By \HR2 the bias of a valid mechanism must be non-negative.

The metrics we would like to optimize may also depend on the type distribution $p=(p(t))_{t\in \calT}$,\footnote{For ease of presentation we treat $p$ as a discrete distribution but all definitions and results (except in Section~\ref{sec:det_hist}) naturally extend to continuous distributions. } where $t_{min}\neq t_{max}$ are the minimal and maximal types with positive support. 
All soft requirements refer to a truthful profile:
\begin{itemize}
    \item Minimize the expected proportion of agents we verify \\$\ver(M,p):= E_{t\sim p}[q_t] =\sum_{t\in \calT}p(t)q_t$.
    \item Minimize the  \textbf{maximal}  bias (always nonnegative by \HR2)\\$\BS(M):=\max_{t\in \calT} (\bar g_t-t)$.
    \item Minimize the  \textbf{expected} bias\\ $\bias(M,p):=E_{t\sim p}[\bar g_t-t]=\sum_{t\in \calT} p(t)(\bar g_t-t)$.
\end{itemize}
While the soft requirements necessarily depend on the type distribution $p$, our hard requirements including truthfulness should hold even if $p$ is unknown (in contrast to most previous work~\cite{ben2014optimal,estornell2021incentivizing}).

\section{Deterministic Verification}\label{sec:det}
In this section we assume verification is deterministic, i.e., whenever $i$ is audited, we get $s_i=t_i$ w.p.~1. 
Even so, the three soft requirements capture design goals that are generically contradictory:
the lower the audit probability, the higher the required threat to maintain truth-telling as an equilibrium. The $\verall$ and  $\hgp$ mechanisms represent the two extremes. 
It is thus natural to ask what is the efficient compromise between accuracy and small sample, in terms of a theoretical bound and, moreover, whether we can find mechanisms that implement it. 

\begin{definition}[Domination]
We say that a mechanism $\M$ (weakly) \emph{dominates} some other mechanism $\M'$ if for any type distribution $p$, $\M$ outperforms $\M'$ in all soft requirements. If in addition $\M$ strictly outperforms $\M'$ on one of those requirements, we say that \emph{strictly} dominates $\M'$.  
\end{definition}
In other words, $\M$ dominates $\M'$ if $\M$ verifies fewer agents, and yet obtains lower maximal bias and expected bias. Furthermore,
\begin{definition}[Efficiency]
     A mechanism is \emph{efficient} if there is no other mechanism that strictly dominates it.     
 \end{definition}

\subsection{The Monotone-cutoff Verification Class} 
\begin{figure}
    \centering
    \newcommand{\gammaval}{2.}
\newcommand{\gammafix}{0.02}
\newcommand{\xival}{0.75}
\newcommand{\maxt}{7.}
\begin{tikzpicture}[yscale=0.7]
    \draw[->] (0,0) -- (\maxt-.5,0) node[right] {$t$};
    \draw[->] (0,-1) -- (0,\maxt-.5) node[above] {};

    \draw (\gammaval, -0.15) -- (\gammaval, 0.15);
     \draw (\maxt-1, -0.15) -- (\maxt-1, 0.15);
   
    \draw (-0.15, \gammaval) -- (0.15, \gammaval);
        \draw (-0.15, 0) -- (0.15, 0);
    \draw (-0.15, -\xival) -- (0.15, -\xival);
    
    \node[below] at (\gammaval, -0.15) {$\gamma$};
        \node[below] at (\maxt-1, -0.15) {$1$};
    \node[left] at (-0.15, \gammaval) {$\gamma$};
     \node[left] at (-0.15, 0) {$0$};   
    \node[left] at (-0.15, -\xival) {-$\xi$};

    \draw[dotted, thin] (0,0) -- (\maxt-0.8,\maxt-0.8);

        \draw[thick,dashed,orange] (\gammaval,\gammaval+2*\gammafix) -- (\maxt-1,\maxt-1+2*\gammafix);
        
    \draw[thick,blue,dash dot] (\gammaval,-\xival) -- (\maxt-1,-\xival);

\draw[thick, green!60!black] (0,\gammaval-2*\gammafix) -- (\gammaval,\gammaval-2*\gammafix);
\draw[thick, green!60!black] (\gammaval,\gammaval-2*\gammafix) -- (\maxt-1,\maxt-1-2*\gammafix); 

    \fill[gray, opacity=0.2] 
    (0,0) -- (0,\gammaval) -- (\gammaval,\gammaval) -- cycle;

\draw[thin, double, red] (0,0) -- (\gammaval,0);
\draw[thin, double, red, domain=\gammaval:\maxt-1, smooth, variable=\t] 
    plot ({\t}, {(\t - \gammaval)*\maxt/(\t + \xival)});

    \fill[red!20, opacity=0.4, domain=\gammaval:\maxt-1, smooth, variable=\t] 
    (0,0) -- (\gammaval,0) -- 
    plot ({\t}, {(\t - \gammaval)*\maxt/(\t + \xival)}) -- 
    (\maxt-1,0) -- cycle;

\begin{scope}[shift={(0.5,6)}] 
    \draw[thick] (-0.3, -2.4) rectangle (2.6, 0.5); 

    \draw[thick, green!60!black] (0,0) -- (0.6,0);
    \node[right] at (0.7,0) {$g_i$ for $\bot$}; 

    
    \draw[dashed, thick, orange] (0,-0.5) -- (0.6,-0.5);
    \node[right] at (0.7,-0.5) {$g_i$ for $s_i=\hat t_i$};
    
    \draw[thick, blue, dash dot]
     (0,-1) -- (0.6,-1);
    \node[right] at (0.7,-1) {$g_i$ for $s_i \neq \hat t_i$}; 

        \draw[thick, double, red]
     (0,-1.5) -- (0.6,-1.5);
    \node[right] at (0.7,-1.5)  {$q_i$}; 

  \draw[thick,dotted]
     (0,-2) -- (0.6,-2);
    \node[right] at (0.7,-2)  {$t$}; 

\end{scope}
\end{tikzpicture}
    \caption{Illustration of an MCV mechanism.  The horizontal line represents the report $\hat t_i=t$, and the vertical axis plays the role of both the grade and the audit probabilities. The audit probability (red double line) is zero for reports below $\gamma$,  and any such report gets $\gamma$ (green horizontal segment). The dotted diagonal line represents the identity. When $\hat t_i>\gamma$, audit probability is increasing, both the default (green) and truthful (dashed orange) grades match $\hat t_i$, and there is a fixed maximal penalty of $-\xi$ (blue dash-dots).}
    \label{fig:MCV}
\end{figure}
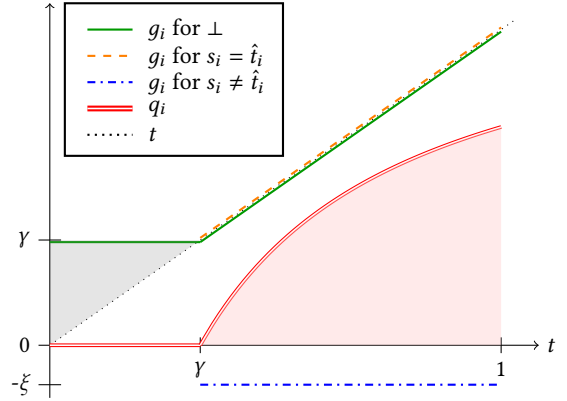
Next, we introduce the class of \emph{Monotone-Cutoff Verification} mechanism (hereinafter \mcv). The class is parameterized with $\gamma \in [0,1]$, which governs the audit probabilities. We emphasize that for an \mcv mechanism and the results in this subsection, the number of agents is irrelevant, and we could have stated them for a single agent. However, our extensions will make use of the entire population.

\paragraph{\underline{The Monotone-Cutoff Verification (\mcv) mechanism:}} The mechanism $\M^{\gamma,\xi}$ takes the parameters $\gamma\in[0,1]$ and $\xi\geq 0$ as input. Since $\xi$ is fixed and derived from \HR3 we will omit it to simplify the notation, writing $\M^{\gamma}$ rather than $\M^{\gamma,\xi}$.

\begin{itemize}
    \item Verification: agent $i$ with report $\hat t_i$ is verified by $\M^{\gamma}_V$ with a probability of 
    \(
    \M^{\gamma}_V(\hat t_i)=\begin{cases}
        \frac{\hat t_i-\gamma}{\hat t_i+\xi}\quad &  \hat t_i > \gamma\\
        0\quad & \hat t_i \leq \gamma
    \end{cases}
    \)
    \item Grading: if agent $i$ was not verified, then she receives a score of 
    \(
    \M^{\gamma}_G(\hat t_i,\bot)=\begin{cases}
        \hat t_i \quad &  \hat t_i > \gamma\\
        \gamma \quad & \hat t_i \leq \gamma
    \end{cases}.
    \)\\
    Otherwise, if the verification yielded a result $s_i$, then
    
    \(\M^{\gamma}_G(\hat t_i,s_i)=\begin{cases}
        \hat t_i \quad &  \hat t_i =s_i\\
        -\xi \quad &  \hat t_i \neq s_i
    \end{cases}.
    \)
    \end{itemize}     

We visualize an example of a MCV mechanism in Figure~\ref{fig:MCV}. The red and gray shaded areas demonstrate the overall audit probability and bias, respectively (for a uniform type distribution). 

Note that for $\xi=0$ and the extreme values $\gamma=0$ and $\gamma=1$, we get the $\verall$ and $\payall$ mechanisms, respectively. Setting $\gamma=0, \xi=\frac1\eps$ almost coincides with the \hgp mechanism. Thus the \mcv class of mechanisms aims to optimally balance between the solutions, where every pick of $\gamma,\xi$ corresponds to a different point on the Pareto frontier.  

Notice that $\M^{\gamma}$ effectively lets agents report their type as ``not greater than $\gamma$'' without being verified, and $\bar g_t=\max\{t,\gamma\}$. 
\begin{proposition}
    All \mcv mechanisms are valid.
\end{proposition}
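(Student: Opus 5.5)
We check the three hard requirements one by one for a fixed $\gamma\in[0,1]$ and $\xi\geq 0$, using deterministic verification, so $s_i=t_i$ whenever $i$ is audited. \HR3 is immediate: every realized grade is one of $\hat t_i\ge 0$, $\gamma\ge 0$, or $-\xi$, so it is at least $-\xi$. For \HR2, a truthful agent reporting $\hat t_i=t_i$ either is not audited or is audited and found consistent. If $t_i\le\gamma$ she gets $\gamma\ge t_i$ with certainty, and if $t_i>\gamma$ she gets $t_i$ in both branches. Her realized grade is therefore $\max\{t_i,\gamma\}\ge t_i$, whatever the others do, since $\M^\gamma$ treats each agent separately. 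This also confirms $\bar g_t=\max\{t,\gamma\}$.

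The main step is \HR1. Because grades and audit probabilities depend only on the agent's own report, it suffices to show that for every type $t$ and every report $\hat t$ we have $\bar g_t(\hat t)\le \bar g_t(t)=\max\{t,\gamma\}$. We split into cases on $\hat t$.

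If $\hat t\le\gamma$, the report is never audited, so $\bar g_t(\hat t)=\gamma\le\max\{t,\gamma\}$.

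If $\hat t>\gamma$ and $\hat t=t$, the report is truthful and there is nothing to prove.

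If $\hat t>\gamma$ and $\hat t\neq t$, an audit reveals $s_i=t\neq\hat t$. The agent then gets $-\xi$ with probability $q=\frac{\hat t-\gamma}{\hat t+\xi}$ and $\hat t$ otherwise, so
\[
\bar g_t(\hat t)=(1-q)\hat t-q\xi=\hat t-q(\hat t+\xi)=\hat t-(\hat t-\gamma)=\gamma\le\max\{t,\gamma\}.
\]
The audit probability is calibrated exactly so that any lie above $\gamma$ yields an expected grade of $\gamma$. Hence no misreport beats truth-telling, and truth-telling is weakly dominant.

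The only delicate point, and the one to check carefully, is that $q$ is a valid probability. For $\hat t>\gamma$ we have $0<\hat t-\gamma\le\hat t\le\hat t+\xi$, so $q\in(0,1]$, and the denominator is positive. The degenerate case $\hat t+\xi=0$ does not arise, since $\hat t>\gamma\ge0$. Combining the three checks shows that every \mcv mechanism is valid.
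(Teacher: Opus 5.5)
Your proof is correct and follows essentially the same argument as the paper. The paper also treats HR2 and HR3 as immediate, and proves HR1 by showing that every misreport yields an expected grade of exactly $\gamma$: a report $\hat t\le\gamma$ is never audited and gets $\gamma$, and for $\hat t>\gamma$ the audit probability is calibrated so that $(1-q)\hat t-q\xi=\gamma$, which is at most the truthful grade $\max\{t,\gamma\}$. Your added check that $q=\frac{\hat t-\gamma}{\hat t+\xi}$ lies in $(0,1]$ is a detail the paper leaves implicit.
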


\begin{proof}
    \textbf{HR2} and \textbf{HR3} are immediate by definition. For truthfulness, note that in $\M^\gamma$ truth telling guaranties at least $\gamma$ for all agents, whereas the maximum expected payoff for untruthful agent,
    \[\max_{\hat t_i \neq t_i}\big(1-\M^\gamma_V(\hat t_i)\big)\M^\gamma_G(\hat t_i,\bot)+\M^\gamma_V(\hat t_i)\cdot \M^\gamma_G(\hat t_i, t_i),\]
    is exactly $\gamma$.  If $\hat t_i \leq \gamma$, the agent receives a score of $\gamma$, without being verified. For $\hat t_i > \gamma$: 
    \begin{align*}
        &\big(1-\M^\gamma_V(\hat t_i)\big)\M^\gamma_G(\hat t_i,\bot)+\M^\gamma_V(\hat t_i)\cdot \M^\gamma_G(\hat t_i , t_i)\\
       = &\Big(1-\frac{\hat t_i -\gamma}{\hat t_i +\xi}\Big)\hat t_i +\frac{\hat t_i -\gamma}{\hat t_i +\xi}(-\xi)= \gamma. \qedhere 
    \end{align*}   
\end{proof}

It is easy to see that setting $\gamma=0$ and $\gamma=1$ (under limited liability) results in the $\verall$ and $\payall$ mechanisms, respectively. 
Next, we show that the class $\Gamma:=\{\M^{\gamma}\ :\ \gamma \in [0,1]\}$ constitutes the efficiency boundary of all valid mechanisms. 

\begin{theorem}\label{thm:optimal_class}
     For any valid mechanism $M$, there exists $\M^\gamma \in  \Gamma$ that weakly dominates $M$. 
\end{theorem}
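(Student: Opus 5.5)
Given a valid mechanism $M$, the plan is to pick $\gamma$ as the smallest expected grade any agent can secure by lying, and then compare $M$ with $\M^\gamma$ type by type. Write $q_t:=M_V(t)$ for the audit probability of report $t$, $u_t:=M_G(t,\bot)$ for its unverified grade, and $v_t:=M_G(t,t)$ for its grade when verification confirms the report. Under deterministic verification, $\bar g_t = q_t v_t + (1-q_t)u_t$.

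First, derive two constraints that $M$ must satisfy. Consider an agent of type $t'$ who reports some $\hat t\neq t'$. If audited, this agent receives $M_G(\hat t,t')\ge -\xi$ by \HR3, so lying to $\hat t$ yields at least $(1-q_{\hat t})u_{\hat t} - q_{\hat t}\xi$. By \HR2 and \HR1, a truthful type $t'$ gets $\bar g_{t'}\ge t'$ and cannot gain by lying. Now set
\[
\gamma := \sup_{t}\big[(1-q_t)u_t - q_t\xi\big],
\]
where the supremum runs over reports $t$ with $t > t_{min}$ (or, more carefully, over all reports that some other type could profitably mimic). Then truthfulness gives $\bar g_{t'}\ge (1-q_{t}) u_t - q_t\xi$ for every $t'\neq t$. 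The key consequence is that every type $t$ has $\bar g_t \ge \gamma$, except possibly at the single type attaining the supremum, which needs care. Combining with \HR2 gives $\bar g_t\ge \max\{t,\gamma\}$, and this is exactly the expected grade under $\M^\gamma$. So both the maximal bias and the expected bias of $\M^\gamma$ are at most those of $M$, for every distribution $p$.

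Second, compare verification. For each report $t$, the definition of $\gamma$ gives $(1-q_t)u_t - q_t\xi\le\gamma$. Combined with $u_t \ge t$, this yields $q_t \ge \frac{u_t-\gamma}{u_t+\xi}$. The claim $u_t\ge t$ holds because a truthful type-$t$ agent must receive at least $t$ even in the unverified branch, by \HR2: the bound is on realized grades, whatever others do, so it applies whenever $q_t<1$. Since $x\mapsto \frac{x-\gamma}{x+\xi}$ is increasing, we get $q_t\ge \frac{t-\gamma}{t+\xi}=\M^\gamma_V(t)$ for $t>\gamma$, while trivially $q_t\ge 0$ otherwise. Summing against $p$ gives $\ver(\M^\gamma,p)\le\ver(M,p)$. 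We also need $\gamma\in[0,1]$. If $\gamma>1$, we can replace it by $1$, i.e., use $\PM$-like \payall, which only improves bias and verification. If $\gamma<0$, we replace it by $0$; the verification bound still holds with $0$ because the bias bound is trivial there.

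The main obstacle is the self-comparison gap in the first step. The truthfulness constraint only compares $t'$ with reports $t\neq t'$, so the supremum defining $\gamma$ might be approached only at $t=t'$. I would handle this by defining $\gamma$ as the supremum over all $t$ and arguing by cases. If the supremum is attained only at a single report $t^*$, then every type other than $t^*$ still gets at least $\gamma$. Type $t^*$ itself gets $\bar g_{t^*}\ge (1-q_{t^*})u_{t^*} - q_{t^*}\xi$ directly, since a truthful agent's realized grade is at least $v_{t^*}\ge t^*\ge -\xi$. Indeed \HR2 gives $v_{t^*}\ge t^*\ge 0 \ge -\xi$, so $\bar g_{t^*}\ge (1-q_{t^*})u_{t^*} - q_{t^*}\xi$, and the gap disappears. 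With that resolved, the remaining steps are monotonicity bookkeeping.
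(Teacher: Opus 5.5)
Your proposal is correct and follows essentially the same route as the paper. It takes $\gamma$ to be the best payoff obtainable by lying when a caught liar gets the worst grade $-\xi$, uses \textbf{HR1} to get $\bar g_t\ge\max\{t,\gamma\}$ (which bounds both biases), and uses $M_G(t,\bot)\ge t$ to get $q_t\ge\frac{t-\gamma}{t+\xi}$ (which bounds verification). The differences are minor: the paper first passes to an auxiliary mechanism $M'$ that sets every mismatched verified grade to exactly $-\xi$, where you invoke \textbf{HR3} directly as a lower bound. The paper also leaves implicit the edge cases you handle explicitly: a report equal to the agent's own type, $q_t=1$, and clipping $\gamma$ into $[0,1]$.
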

Meaning, if minimum possible grade is $-\xi$ and we demand \textbf{HR1} and \textbf{HR2}, we only need to choose $\gamma \in [0,1]$ for which the corresponding $\M^\gamma \in  \Gamma$ best suits our ideal trade-off between accuracy and verification (where these depend also on the prior distribution of types $p$). In particular, as the proof shows, $\ver(\M^\gamma,p),\BS(\M^\gamma)$ and $\bias(\M^\gamma,p)$ are easy to compute for any given type dist. $p$.  
\begin{proof}
  Denote, for any type distribution $p$ and $\gamma\in [0,1]$ the following measures that are independent of the mechanism:
\begin{align}
    \Bias(\gamma,p)&:=\sum_{t\leq \gamma}p(t)(\gamma-t);\label{eq:err} \\
    \Ver(\gamma,p)&:= \sum_{t>\gamma}p(t)\frac{t-\gamma}{t+\xi} \label{eq:ver}
\end{align}

We first argue that any mechanism $\M^\gamma$ meets these measures exactly, i.e., $\forall \gamma\in[0,1]$ and $p\in \Delta([0,1])$, $\bias(\M^\gamma,p)=\Bias(\gamma,p)$ and $\ver(\M^\gamma,p)=\Ver(\gamma,p)$.
 
  Indeed, these equalities follow directly from the definitions of $\M^\gamma_G$ and $\M^\gamma_V$, respectively. Also note that $\BS(\M^\gamma)=\gamma-t_{min}$.
  

  Now take $M$ as described in the theorem. Let $M'$ be a mechanism identical to $M$, except
  $M'_G(\hat t_i, s_i):=-\xi$ for $\hat t_i\neq s_i$. We argue that $M'$ is also valid, and dominates $M$. First, since penalty for cheating is weakly increasing by \HR3, being truthful remains a weakly dominant strategy. Second, grades of all truthful agents are unchanged, and thus no change in bias.  
  Next, let
  \begin{align*}
      \tilde \gamma &:= \max_{\hat t_i \neq t_i}\big(1-M'_V(\hat t_i)\big)M'_G(\hat t_i,\bot)+M'_V(\hat t_i)\cdot M'_G(\hat t_i, t_i)\\
      &= \max_{\hat t_i}\big(1-M'_V(\hat t_i)\big)M'_G(\hat t_i,\bot)+M'_V(\hat t_i)\cdot (-\xi).
  \end{align*}
 $\tilde\gamma$ the maximum gain for untruthful agents under $M'$, regardless of their true type. We will see that $\M^{\tilde \gamma}\in \Gamma$ dominates $M'$ (and thus $M$).
 
 If a grade of $\tilde \gamma$ is achievable by lying, then by \textbf{HR1} all truthful agents are graded at least $\tilde \gamma$ in equilibrium, therefore 
 \[\bias(M',p) \geq \sum_{t \leq \tilde \gamma}p(t)(\tilde \gamma-t)=\bias(\M^{\tilde \gamma})\]
and, in particular, a type-zero agent has at least $g_0=\tilde \gamma$, thus $\BS(M') \geq \tilde \gamma -t_{min}=\BS(\M^{\tilde \gamma})$. 
Moreover, for all $\hat t_i \in [0,1]$ we have $M'_G(\hat t_i,\bot)\geq \hat t_i$ by \HR2, and thus by definition of $\tilde \gamma$,  
\begin{align*}
   \tilde \gamma 
   & \geq \big(1-M'_V(\hat t_i)\big)\hat t_i - M'_V(\hat t_i)\xi
   \implies M'_V(\hat t_i)  \geq \frac{\hat t_i - \tilde \gamma}{\hat t_i +\xi} 
\end{align*}
 Summing over all types $ \tilde \gamma  < \hat t_i \leq 1 $ we get 
  \[\ver(M') \geq \sum_{t > \tilde \gamma}p(t)\frac{t-\tilde \gamma}{t+\xi}=\ver(\M^{\tilde \gamma}).\qedhere\]
\end{proof}

\begin{corollary}
    Every \mcv mechanism is efficient.
\end{corollary}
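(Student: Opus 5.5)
The plan is to argue by contradiction, using Theorem~\ref{thm:optimal_class} together with the explicit formulas for the three measures of an \mcv mechanism. Fix $\M^{\gamma}\in\Gamma$ and suppose some mechanism $M$ strictly dominates it. Only valid mechanisms are meaningful competitors, so $M$ is valid. By Theorem~\ref{thm:optimal_class} there is some $\M^{\gamma'}\in\Gamma$ that weakly dominates $M$. Since domination is transitive, $\M^{\gamma'}$ then dominates $\M^\gamma$, and strictly so: for some distribution $p$ and some soft requirement, $\M^{\gamma'}$ is strictly better than $M$ there, which is in turn at least as good as $\M^\gamma$. It therefore suffices to show that no member of $\Gamma$ strictly dominates another member of $\Gamma$.

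For this step I use the quantities computed in the proof of Theorem~\ref{thm:optimal_class}:
\[\bias(\M^\gamma,p)=\Bias(\gamma,p),\qquad \ver(\M^\gamma,p)=\Ver(\gamma,p),\qquad \BS(\M^\gamma)=\gamma-t_{min}.\]
The map $\gamma\mapsto\Bias(\gamma,p)$ is weakly increasing in $\gamma$. The map $\gamma\mapsto\Ver(\gamma,p)$ is weakly decreasing, since each term $\frac{t-\gamma}{t+\xi}$ decreases in $\gamma$ and the summation range shrinks. The dependence on $\gamma$ is thus opposite for the two measures, and there are two cases.

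\emph{Case $\gamma'<\gamma$.} I pick a distribution that witnesses a strict verification gap, for instance a point mass at $t=1$ (or any $p$ with mass above $\gamma'$). Then $\Ver(\gamma',p)>\Ver(\gamma,p)$, because $\frac{1-\gamma'}{1+\xi}>\max\{0,\frac{1-\gamma}{1+\xi}\}$. So $\M^{\gamma'}$ is strictly worse on verification for this $p$ and cannot dominate $\M^\gamma$.

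\emph{Case $\gamma'>\gamma$.} I take a distribution with $t_{min}=0$, for example mass on $0$ and $1$. Then $\BS(\M^{\gamma'})=\gamma'>\gamma=\BS(\M^\gamma)$, and likewise $\Bias(\gamma',p)>\Bias(\gamma,p)$. So $\M^{\gamma'}$ is strictly worse on bias.

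Hence $\gamma'=\gamma$. But a mechanism cannot strictly dominate itself, which contradicts the conclusion of the first paragraph.

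The main obstacle is the quantifier structure of domination, which requires outperformance for \emph{every} $p$. The argument needs a single witnessing distribution in each case, and care at the endpoints. When $\gamma=1$ and $\gamma'<1$, the point mass at $1$ still works, since $\Ver(1,p)=0<\Ver(\gamma',p)$. When $\gamma=0$, only the case $\gamma'>0$ arises, and the bias witness handles it. The remaining subtlety is confirming that a strictly dominating $M$ must be valid. If non-valid mechanisms were allowed, efficiency would fail trivially, for instance because a non-truthful zero-verification mechanism could grade everyone exactly. So the corollary has to be read as efficiency among valid mechanisms, as Theorem~\ref{thm:optimal_class} is stated.
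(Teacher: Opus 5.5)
Your proof is correct and follows the paper's route: reduce via Theorem~\ref{thm:optimal_class} to strict domination inside $\Gamma$, then rule that out because $\Bias(\cdot,p)$ and $\Ver(\cdot,p)$ move strictly in opposite directions in $\gamma$ for a suitable $p$ (the paper uses a single $p$ with an atom strictly between the two cutoffs rather than one witness per case, and it leaves implicit the two points you make explicit, namely the case $\gamma'=\gamma$ and the validity of $M$). One slip in your first paragraph: the strictness of $\M^{\gamma'}$ over $\M^\gamma$ comes from $M$ being strictly better than $\M^\gamma$ for some $p$ and some requirement while $\M^{\gamma'}$ is weakly better than $M$ everywhere, not from $\M^{\gamma'}$ being strictly better than $M$ as you wrote.
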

\begin{proof} If $\M^{\gamma}$ is strictly dominated by any other mechanism, the theorem gives that it is also strictly dominated by some $\M^{\gamma'} \in \Gamma$. But this is impossible since for any distribution $p$ and for any two $0 \leq \gamma <t^*< \gamma' \leq 1$ (where $p(t^*)>0$), $\ver(\M^{\gamma'},p) < \ver(\M^{\gamma},p)$ while $\bias(\M^{\gamma'},p) > \bias(\M^{\gamma},p)$.
\end{proof}

\paragraph{Optimizing the cutoff point}
Theorem~\ref{thm:optimal_class} provides us with a way to explicitly bound the tradeoff on bias/verification.

For  $\bparam\in [0,1]$ let 
$\gamma(\bparam,p):= \max\{\gamma\in [0,1] : \Bias(\gamma,p)\leq \bparam\}.$

\begin{corollary} \label{cor:beta_bound}
  For every $\bparam \in [0,1]$,  type dist. $p$,\\  $\bias(\M^{\gamma(\bparam,p)},p)=\bparam$; and $\ver(\M^{\gamma(\bparam,p)},p)=\Ver(\gamma(\bparam,p),p)$.
\end{corollary}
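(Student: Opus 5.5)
The plan is to reduce both claims to properties of the mechanism-independent function $\Bias(\cdot,p)$ from the proof of Theorem~\ref{thm:optimal_class}. That proof already shows $\bias(\M^\gamma,p)=\Bias(\gamma,p)$ and $\ver(\M^\gamma,p)=\Ver(\gamma,p)$ for every $\gamma\in[0,1]$. So the second claim follows immediately by plugging in $\gamma=\gamma(\bparam,p)$. For the first claim it suffices to show $\Bias(\gamma(\bparam,p),p)=\bparam$, which is purely a statement about the function $\gamma\mapsto\Bias(\gamma,p)=\sum_{t\le\gamma}p(t)(\gamma-t)$.

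\emph{Step 1: properties of $\Bias$.} I will show that $\Bias(\cdot,p)$ is continuous and nondecreasing on $[0,1]$, and that $\Bias(0,p)=0$.
\begin{itemize}
\item Monotonicity: every summand $p(t)(\gamma-t)$ is nonnegative and nondecreasing in $\gamma$, and raising $\gamma$ only adds new nonnegative summands.
\item Value at $0$: the only possible summand is $t=0$, which contributes $p(0)\cdot 0=0$.
\item Continuity: this is the one point that needs care for discrete $p$, since new atoms enter the sum as $\gamma$ grows. An atom $t$ enters exactly at $\gamma=t$ with contribution $p(t)(\gamma-t)=0$, so no jump occurs.
\end{itemize}
Equivalently, $\Bias(\gamma,p)=E_{t\sim p}[(\gamma-t)^+]$, which is $1$-Lipschitz in $\gamma$. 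This form also covers continuous distributions in one line.

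\emph{Step 2: the maximum exists and attains $\bparam$.} The set $S=\{\gamma\in[0,1]:\Bias(\gamma,p)\le\bparam\}$ is nonempty, since $0\in S$. It is also closed, being the preimage of a closed set under a continuous map, so its maximum $\gamma^*=\gamma(\bparam,p)$ is well defined. Suppose $\Bias(\gamma^*,p)<\bparam$ and $\gamma^*<1$. Then continuity gives $\gamma'\in(\gamma^*,1]$ with $\Bias(\gamma',p)<\bparam$, contradicting maximality. Hence $\Bias(\gamma^*,p)=\bparam$ whenever $\gamma^*<1$.

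\emph{Main obstacle: the boundary case $\gamma^*=1$.} There the argument only yields $\Bias(1,p)\le\bparam$, where $\Bias(1,p)=1-E_p[t]$. So the equality $\bias=\bparam$ holds precisely when $\bparam\le 1-E_p[t]$; this is guaranteed by the intermediate value theorem applied on $[0,1]$. If $\bparam>1-E_p[t]$, then $\M^{\gamma(\bparam,p)}=\M^1$ has bias $1-E_p[t]<\bparam$, and the statement should be read as $\bias\le\bparam$ with equality in the nondegenerate range. I would state this restriction explicitly, since for $\bparam$ above the maximal achievable bias the literal equality is false.
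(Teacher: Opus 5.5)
Your proposal is correct and follows the paper's implicit argument. The corollary is just the identities $\bias(\M^\gamma,p)=\Bias(\gamma,p)$ and $\ver(\M^\gamma,p)=\Ver(\gamma,p)$ from the proof of Theorem~\ref{thm:optimal_class}, evaluated at $\gamma=\gamma(\bparam,p)$; your continuity argument supplies the step $\Bias(\gamma(\bparam,p),p)=\bparam$, which the paper leaves unstated.

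Your boundary caveat is also right. Since $\Bias(1,p)=1-E_{t\sim p}[t]$, the equality $\bias=\bparam$ holds only for $\bparam\le 1-E_{t\sim p}[t]$. Above that threshold $\gamma(\bparam,p)=1$ and one only gets $\bias<\bparam$, a restriction the paper's statement omits.
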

 As we saw in the proof of the theorem, no mechanism with lower bias can verify less.

\paragraph{Limited liability}
In particular, if $\xi=0$ the following corollary is implied, showing that \textit{some} bias is inescapable unless we verify each and every agent.
\begin{corollary}\label{cor:LL err/smpl}
    Suppose $p(0)>0$. Under limited liability, no valid mechanism exists such that $\bias(M,p)=0$ and $\ver(M,p)<1$. 
\end{corollary}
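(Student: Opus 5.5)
The plan is to reduce to the \mcv class via Theorem~\ref{thm:optimal_class} and then compute the two measures directly at $\xi=0$. Let $M$ be valid with $\bias(M,p)=0$. By Theorem~\ref{thm:optimal_class} there is $\M^{\tilde\gamma}\in\Gamma$ that weakly dominates $M$. In particular $\bias(\M^{\tilde\gamma},p)\le\bias(M,p)=0$ and $\ver(\M^{\tilde\gamma},p)\le \ver(M,p)$.

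First I pin down $\tilde\gamma$. From \eqref{eq:err}, $\Bias(\tilde\gamma,p)\ge p(0)\,\tilde\gamma$. Since $p(0)>0$, zero bias forces $\tilde\gamma=0$. I will also re-derive this directly from the proof of the theorem, to record the per-report inequality. A type-$0$ agent must have $\bar g_0=0$. By \HR1, every lying report is then worth at most $0$ to her, so the maximal gain from lying is $\tilde\gamma\le 0$. The inequality $M'_V(\hat t)\ge (\hat t-\tilde\gamma)/(\hat t+\xi)$ with $\xi=0$ then gives $M_V(\hat t)\ge 1$, i.e.\ $q_t=1$, for every report $\hat t>0$. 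So every positive type in the support is audited with certainty, and $\ver(M,p)\ge \Ver(0,p)=\sum_{t>0}p(t)$.

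The remaining and decisive step is the audit probability $q_0$ of the report $0$, since $\ver(M,p)=\sum_{t>0}p(t)+p(0)q_0$. The corollary as worded therefore needs $q_0=1$. I will first attempt this from \HR2 and the requirement $\bar g_0=0$. An unverified truthful type-$0$ agent receives $M_G(0,\bot)\ge 0$ by \HR2, and a verified one receives $M_G(0,0)\ge 0$. Zero bias then forces both realized grades to be exactly $0$ on their respective events.

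This is where I expect the argument to break. Setting $q_0=0$ and $M_G(0,\bot)=0$ keeps the deviation payoff $(1-q_0)M_G(0,\bot)$ for any type reporting $0$ equal to $0$, so \HR1 is not violated. Such a mechanism is exactly $\M^{0}$, and \eqref{eq:ver} gives $\Ver(0,p)=1-p(0)<1$. So the obstacle is not technical: with the definitions as stated, the clean conclusion is $\ver(M,p)\ge 1-p(0)$, with equality attained by $\M^0$ (\verall restricted to positive reports). Only a convention that also counts the report $0$ as audited — as in the paper's identification of $\M^0$ with \verall, which audits every report — yields the literal bound $\ver=1$.

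I would therefore present the proof through the theorem as above, deriving $\tilde\gamma=0$ and $q_t=1$ for all $t>0$. I would then state explicitly that the last step, $q_0=1$, holds only under the \verall convention for the report $0$, and that otherwise the statement should read ``$\ver(M,p)<1-p(0)$''.
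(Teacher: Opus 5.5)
Your argument is correct, and it is the route the paper intends. The paper gives no separate proof; it only says the corollary is implied by Theorem~\ref{thm:optimal_class}. That is, with $p(0)>0$, zero bias forces $\tilde\gamma=0$ (as $\Bias(\tilde\gamma,p)\ge p(0)\tilde\gamma$), and $\M^0$ is then read as \verall.

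Your per-report version of this step is sound. Zero bias and \textbf{HR2} force $\bar g_0=0$. For each report $\hat t>0$, apply \textbf{HR1} to a type-$0$ agent, and combine it with $M_G(\hat t,\bot)\ge\hat t$ (\textbf{HR2}) and $M_G(\hat t,0)\ge 0$ (\textbf{HR3} with $\xi=0$). This gives $(1-q_{\hat t})\hat t\le 0$, hence $q_{\hat t}=1$.

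You are also right that the final step cannot be completed, and the defect is in the statement, not in your proof. By definition $\M^0_V(0)=0$, so $\M^0$ differs from \verall exactly at the report $0$, which has mass $p(0)>0$. Under $\xi=0$, $\M^0$ has the following properties:
\begin{itemize}
\item It is valid: a positive type gets $t$ when truthful and $0$ from any lie, and a type-$0$ agent gets $0$ either way.
\item It has zero bias, since $\bar g_t=t$ for every $t$.
\item By the paper's own \eqref{eq:ver}, it has $\ver(\M^0,p)=\Ver(0,p)=1-p(0)<1$.
\end{itemize}
So the corollary as worded is false whenever $p$ has an atom at $0$, which is exactly what $p(0)>0$ means in the paper's discrete presentation. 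The paper's ``implied'' derivation silently relies on the inexact identification $\M^0=\verall$. The tight correct version is the one you prove: $\ver(M,p)\ge 1-p(0)$, i.e., every agent of positive type must be audited.

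The literal conclusion $\ver(M,p)=1$ does hold when $p$ has no atom at $0$ but charges every interval $[0,\delta]$ (e.g., a density positive near $0$). In that case, apply your inequality to zero-bias types $t_0\to 0$. This gives $(1-q_{\hat t})\hat t\le t_0$, so again $q_{\hat t}=1$ for all $\hat t>0$, and now that set of types has full mass.
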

\begin{figure}
    \centering
    \includegraphics[width=0.9\linewidth]{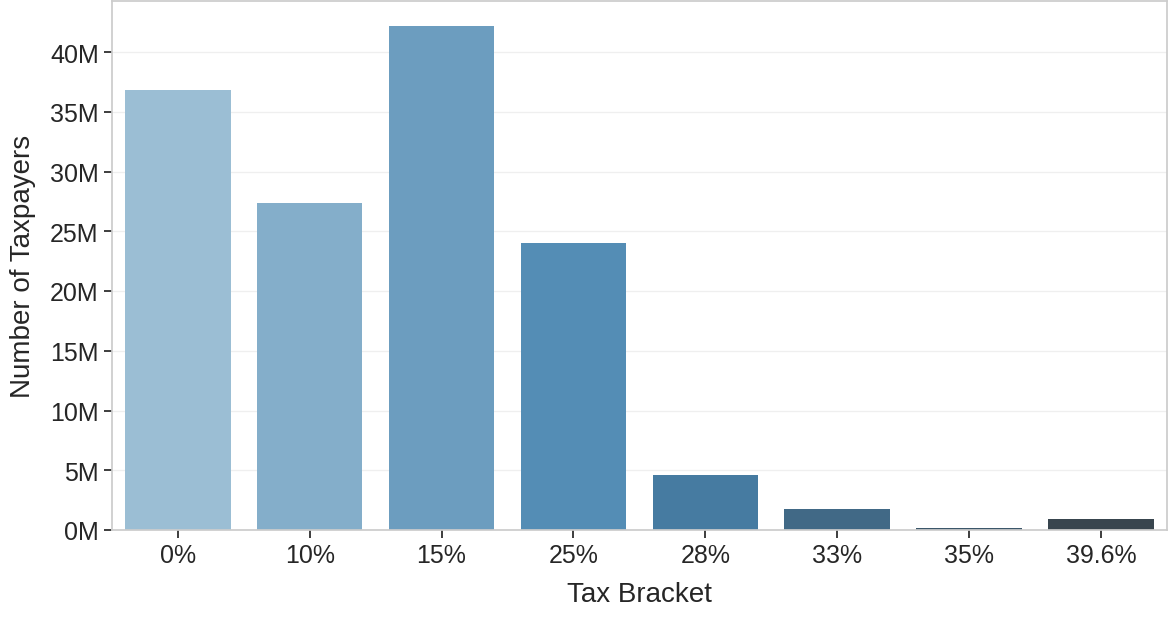}
    \caption{ US income distribution by tax brackets in 2013~\cite{IRS}. }
    \label{fig:tax}
\end{figure}
\paragraph{Example}
 Suppose we apply the MCV mechanism to decide on tax audits on the population from Fig.~\ref{fig:tax}, where a grade of `1' means full tax exemption (0\% tax). Note that there are no `grade 0' (100\% tax) agents so Corollary~\ref{cor:LL err/smpl} does not apply. In fact by  setting $\gamma=0.6$ and $\xi=0$ (i.e., tax full income of cheaters) we can get 0 bias by auditing  30\% of the population. 
 
 In contrast, if we maintain limited liability and set $\gamma=0.75$ (cap  max tax bracket at 25\%) then we will have an expected bias of $0.003$, auditing only 14.6\% of the people.
 
\subsection{Unknown Prior}\label{subsec:unknown prior}
The definition of the \mcv class, its strategy-proofness and efficiency do not require an a-priori assumed type distribution. However, running $\M^\gamma$ for some arbitrary $\gamma \in [0,1]$ offers no guarantee in terms of average bias and expected sample size. 

In what comes next, we show that we can in fact approximate pretty well any desired goals on the Pareto curve, even without a prior, provided there are enough agents from the same distribution $p$. Notably, the agents' types need not be independent.

The basic principle of the  mechanism we construct is that for every agent $i$, we pick a slightly different \mcv mechanism,  \emph{based on the observed distribution of all agents excluding her}. As we will see, this scheme will approximate the desired bias goal by an additive $O(1/n)$ factor, with no additional verification. 

\begin{definition}
    For all $i \in N$, let $p_{-i}: [0,1] \to [0,1]$ be the observed  (reported) type distribution aggregated over $\hat t_j, j\neq i$. Since $p_{-i}$ is identical for every two agents of the same type, we also use the notation $p_t$. 
\end{definition}

\paragraph{\underline{Agent-Specific \mcv (\asmcv) mechanism:}}
For all $\bparam\in [0,1]$, define   $\AM^{\bparam}$ as follows.
For every agent $i$, let $\gamma_i:= \gamma(\frac{n}{n-1}\bparam,p_{-i})$; 
then apply $\M^{\gamma_i}$ to agent $i$.

\medskip
We first observe that $\AM^\bparam$ is valid, since any $\M^{\gamma_i}$ is valid, and since the report of $i$ cannot influence which mechanism she will face. 
 $\bparam$ is our desired bound on the expected bias, and we wish to minimize $\ver(M,p)$ \emph{without knowing $p$}.

 \begin{theorem}\label{thm:AS-MCV efficiency}
    For every $\bparam \in [0,1]$, and any type dist. $p$,
    \begin{enumerate}
        \item $\bias(\AM^{\bparam},p) \leq \bparam + \frac{1}{n}$
        \item $\ver(\AM^{\bparam},p) \leq \Ver(\gamma(\bparam,p),p)$.
    \end{enumerate}
   
 \end{theorem}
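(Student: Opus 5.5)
The plan is to read $p$ as the empirical distribution of the $n$ (truthfully reported) types, so that $p(t)$ is the fraction of agents of type $t$. The whole argument then rests on one identity relating $p_{-i}$ to $p$. Under a truthful profile, $p_{-i}=\frac{np-e_{t_i}}{n-1}$. Since $\Bias(\gamma,\cdot)$ in \eqref{eq:err} is linear in the distribution, this gives
\[
\Bias(\gamma,p_{-i})=\frac{n\Bias(\gamma,p)-(\gamma-t_i)^+}{n-1}.
\]
Hence the condition defining $\gamma_i=\gamma(\frac{n}{n-1}\bparam,p_{-i})$, namely $\Bias(\gamma,p_{-i})\le \frac{n}{n-1}\bparam$, is equivalent to
\[
\Bias(\gamma,p)\le \bparam+\tfrac{1}{n}(\gamma-t_i)^+ .
\]
I will also use two properties of $\Bias(\gamma,p)$: it is continuous and nondecreasing in $\gamma$. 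Continuity guarantees that the maximum in the definition of $\gamma(\cdot,\cdot)$ is attained, and monotonicity lets me sandwich each $\gamma_i$ between two $p$-based cutoffs.

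\textbf{Lower sandwich (verification).} Let $\gamma^*:=\gamma(\bparam,p)$. Then $\Bias(\gamma^*,p)\le\bparam\le\bparam+\frac1n(\gamma^*-t_i)^+$, so $\gamma^*$ is feasible in the maximization defining $\gamma_i$. Therefore $\gamma_i\ge\gamma^*$ for every $i$. Since the MCV audit probability $\frac{t-\gamma}{t+\xi}$ (or $0$ when $t\le\gamma$) is nonincreasing in $\gamma$, each agent $i$ is audited with probability at most the audit probability $\M^{\gamma^*}$ would assign to $t_i$. Averaging over agents gives $\ver(\AM^\bparam,p)\le\Ver(\gamma^*,p)$, which is part 2.

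\textbf{Upper sandwich (bias).} Each $\gamma_i$ satisfies $\Bias(\gamma_i,p)\le\bparam+\frac1n(\gamma_i-t_i)^+\le\bparam+\frac1n$, because all types and cutoffs lie in $[0,1]$. Hence $\gamma_i\le\gamma':=\gamma(\bparam+\frac1n,p)$. If $\bparam+\frac1n>1$ the constraint is vacuous, so $\gamma'=1$; the argument is unchanged.

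By validity of MCV, the expected grade of truthful agent $i$ is $\max\{t_i,\gamma_i\}$, so her bias is $(\gamma_i-t_i)^+\le(\gamma'-t_i)^+$. Averaging over the $n$ agents gives
\[
\bias(\AM^\bparam,p)\le \Bias(\gamma',p)\le \bparam+\tfrac1n,
\]
which is part 1.

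I expect the main obstacle to be interpretive rather than technical. The proof needs $p$ to be the realized type profile, or else an argument conditional on the realized profile, so that removing agent $i$ perturbs $\Bias$ by exactly one term of size at most $\frac1n$. Once that identity is in place, both parts reduce to the monotonicity comparisons above.
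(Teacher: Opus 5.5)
Your proof is correct and follows essentially the same route as the paper. Both arguments rest on the one-agent-removal identity $\Bias(\gamma,p)=\frac{n-1}{n}\Bias(\gamma,p_{-i})+\frac{1}{n}(\gamma-t_i)^+$, with $p$ read as the empirical type distribution exactly as in the paper's appendix, together with monotonicity of $\Bias$ and of the MCV audit probability in the cutoff. The difference is only in packaging. You sandwich every cutoff as $\gamma(\bparam,p)\le\gamma_i\le\gamma(\bparam+\frac{1}{n},p)$, whereas the paper bounds the bias through the largest cutoff among biased types and identifies $\gamma_t=\gamma(\bparam,p)$ exactly for verified types. Your one-sided feasibility argument is slightly cleaner, since it never needs $p_{-i}$ and $\frac{n}{n-1}p$ to agree over the whole range of the maximization.
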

 Compare these bounds to Cor.~\ref{cor:beta_bound}: 
the theorem shows that we  only lose an additive factor of $\frac1n$ on the average bias by not knowing $p$ in advance, without additional verification. 
 
 \full{The proof is in the appendix.}{} Intuitively, the challenge is that there is no longer a single cutoff. But the maximal and minimal $\gamma_{-i}$ can be used to consider all relevant types for $\bias$ and $\ver$, respectively.  Since $p_{-i}$ is close to $p$ the increase in bias is bounded. Then we show that all agents that affect $\ver()$ use the same mechanism $\M^{\gamma(\bparam,p)}$.
 
 It is not hard to see how one can similarly minimize the bias subject to bounded verification capacity.

\section{Noisy Verification}\label{sec:noisy}
The MCV mechanism depends on the fact that verification is deterministic, i.e., that liars are always caught, if verified. In this section we relax the assumption of deterministic verification, meaning that $T_i$ is not guaranteed to return $t_i$, but rather some random variable $s_i \sim T_i$ with expected value $t_i$. 

To see why MCV (and any mechanism with  penalties) fails under noisy verification, note that even a truthful agent might get $s_i<\hat t_i$ and be punished, in violation of \HR2. We therefore  have to relax \HR2 to \HR2': we still require unverified agents to get at least $\hat t$, and truthful agents should get at least $\hat t$ only \emph{in expectation} (i.e., $\bar g_t\geq t$). In other words, we require that the bias is nonnegative. 


All mechanisms we construct in this section make use of \emph{proper scoring rules}.
\paragraph{Scoring rules}
Based on \cite{gneiting2007strictly}, a \emph{proper scoring rule} is a function $R:[0,1]^2\rightarrow \mathbb R$ s.t. $E_{s\sim T}[R(x,s)]$ is maximized for $x= E_{s\sim T}[s]$ for any distribution $T$.\footnote{More generally, scoring rules allow the agent to report her full distribution $T$ rather than just its expectation, but this simplified form will suffice for our needs.}

A common example of a proper scoring rule is the \emph{continuous semi power
scoring rule} 
\begin{equation}
    R^\alpha(x,s):=\alpha \cdot s\cdot x^{\alpha-1}-(\alpha-1)x^\alpha. \label{eq:R_alpha}
\end{equation}
This is an asymmetric variant of the power scoring rules studied in \cite{selten1998axiomatic}. For $\alpha=2$ we get the semi-quadratic scoring rule $R^2(x,s)=2sx-x^2$.  It is known from \cite{gneiting2007strictly} that a function can be implemented by a proper scoring rule if and only if it is strictly convex, and this will be useful for us later on. In particular, $E_{s\sim T}[R^\alpha(t,s)]=t^\alpha$ for any distribution $T$ with mean $t\in[0,1]$, and any $\alpha>1$. 

\paragraph{Na\"ive mechanisms}
In the deterministic verification case, there were simple valid mechanisms that minimized \emph{either} the bias \emph{or} the audit probability (while performing poorly on the other). Indeed, the $\payall$ mechanism still applies in the noisy setting. A parallel of $\verall$ is also possible, by applying the scoring rule $R^{1+\eps}(\hat t,s)$ to \emph{all} agents, for some arbitrarily small $\eps>0$.\footnote{This introduces a small \emph{negative} bias, which is not allowed, but we can fix this by adding a small constant to the grade, see below.} 

However, designing a mechanism with nontrivial bounds both on the bias and the verification cost is much more challenging than in the deterministic case, since we can never be certain that a particular agent has cheated. E.g., it is not a-priori clear how to implement a mechanism similar to  $\hgp$ (even ignoring \HR3), as in the noisy setting large penalties may violate both \HR1 and \HR2'. We will later return to this point. We also show \full{in Appendix~\ref{apx:no_score}}{in the full version} that applying a proper scoring rule only to some agents is \emph{not} truthful.


We therefore provide two partial solutions in this section. The first maintains all of our previous assumptions, implementing mechanisms that behave like `scoring-rules-in-expectation'. The second mechanism exploits the size of the population and assumes access to the accurate distribution of the true types, and has a unique equilibrium with no verification and  arbitrarily small bias.

\paragraph{Deliberate failures}
One may worry about situations where agents deliberately under-perform when getting verified. We highlight that although we made no explicit requirement regarding this type of manipulation, the grade in all of our mechanisms is monotonic in the realization $s$, and thus agents cannot gain from such behavior.

\subsection{Scoring Rules `in Expectation'}

The idea is not to apply a scoring rule only on verified agents, but to set up the mechanism such that every agent is subject to a scoring rule \emph{in expectation}, and thus has an incentive to be truthful.

We demonstrate this first with a simple, parameter-free mechanism with linear audit probability. 
\paragraph{\underline{Linear Verification (LV) mechanism:}}

\begin{itemize}
    \item $\lv_V(\hat t_i) := \hat t_i$.
    \item $\lv_G(\hat t_i,\bot) := \frac{1}{4} +\hat t_i$.
    \item $\lv_G(\hat t_i, s_i) := 2s_i - \frac34$.
\end{itemize}

\begin{lemma}\label{lemma:LV_g}
    $\bar g_t(\hat t) = \frac{1}{4} + R^2(\hat t,  t)$, where grade expectation is taken over both the random verification decision of the mechanism, and the performance of the verified agent.
\end{lemma}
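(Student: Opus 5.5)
The plan is a direct computation from the definition of $\bar g_t(\hat t)$ in \eqref{eq:g}, specialized to $\lv$. The only probabilistic input is that a verified agent of type $t$ produces $s_i\sim T_i$ with $E[s_i]=t$. We also use that $\lv_G(\hat t,s)$ is affine in $s$. Together these let us replace $E_{s_i\sim T_i}[\lv_G(\hat t,s_i)]$ by $2t-\frac34$ without knowing anything else about $T_i$ (its variance or correlations with other agents do not matter). The report $\hat t$ affects only the audit probability and the default grade, so everything else is explicit.

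Concretely, I would write
\[
\bar g_t(\hat t)=\hat t\Bigl(2t-\tfrac34\Bigr)+(1-\hat t)\Bigl(\tfrac14+\hat t\Bigr)
\]
and expand it. The first term gives $2t\hat t-\frac34\hat t$. The second gives $\frac14+\hat t-\frac14\hat t-\hat t^2=\frac14+\frac34\hat t-\hat t^2$. The $\pm\frac34\hat t$ terms cancel, leaving $\frac14+2t\hat t-\hat t^2$.

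I would then compare with \eqref{eq:R_alpha} at $\alpha=2$, namely $R^2(x,s)=2sx-x^2$. Evaluating at $x=\hat t$, $s=t$ gives exactly $2t\hat t-\hat t^2$, which establishes the lemma. It is worth noting, though not required, that this is also $E_{s\sim T}[R^2(\hat t,s)]$ by linearity in $s$. This is what makes the mechanism a scoring rule ``in expectation'': the expected grade is uniquely maximized at $\hat t=t$, giving $\bar g_t=\frac14+t^2$.

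There is no real obstacle; the one point needing care is justifying the interchange of expectation. The verification decision is made with probability $\hat t$, independently of the realized sample. Conditional on being verified, only the mean of $T_i$ enters because the verified grade is linear in $s_i$. Hence the expectation over both sources of randomness factorizes as in \eqref{eq:g}.
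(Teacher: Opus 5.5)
Your proposal is correct and takes essentially the paper's route. The paper proves the general Lemma~\ref{lemma:PV_g} by the same direct expansion of \eqref{eq:g}, using only that the verified grade is affine in $s$ so that $E[s]=t$ suffices; the LV mechanism is its special case $\kappa=1,\theta=1$, where $\frac{\kappa^\kappa}{(\kappa+1)^{\kappa+1}}=\frac14$, and your algebra $\hat t(2t-\frac34)+(1-\hat t)(\frac14+\hat t)=\frac14+2t\hat t-\hat t^2$ checks out.
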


    \begin{corollary}
        The LV mechanism satisfies \HR1,   \HR2', and \HR3 with $\xi=\frac{3}{4}$.
    \end{corollary}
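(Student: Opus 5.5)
The plan is to derive the three requirements from Lemma~\ref{lemma:LV_g}, which gives $\bar g_t(\hat t)=\frac14+R^2(\hat t,t)=\frac14+2t\hat t-\hat t^2$. The bias and the minimal-grade conditions then need a couple of short calculations. To justify the formula itself, note that the verified grade $2s_i-\frac34$ is linear in $s_i$, so its expectation is $2t-\frac34$ for any $T_i$ with mean $t$. Using \eqref{eq:g} with $q=\hat t$ we get
\[
\bar g_t(\hat t)=\hat t\bigl(2t-\tfrac34\bigr)+(1-\hat t)\bigl(\tfrac14+\hat t\bigr),
\]
and expanding gives $\frac14+2t\hat t-\hat t^2$.

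\textbf{HR1.} For fixed $t$, the function $\hat t\mapsto 2t\hat t-\hat t^2$ is a strictly concave quadratic maximized at $\hat t=t$; equivalently, $R^2$ is a proper scoring rule. The expected grade of agent $i$ depends only on her own report and her own $T_i$, since neither $\lv_V$ nor $\lv_G$ uses other agents' reports. Hence truth-telling maximizes her expected grade whatever the others do, so it is a weakly (in fact strictly) dominant strategy. This holds even when $T_i$ is correlated with other agents, because only the mean of $T_i$ enters.

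\textbf{HR2'.} A truthful agent gets $\bar g_t=\frac14+t^2$, so the bias is $\frac14+t^2-t=(t-\frac12)^2\ge 0$. An unverified agent receives $\frac14+\hat t\ge\hat t$, which is the other half of \HR2'. Both facts hold regardless of what the other agents report.

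\textbf{HR3.} Realized grades are either $\frac14+\hat t\ge \frac14$ or $2s_i-\frac34\ge -\frac34$, since $s_i\in\calT\subseteq[0,1]$. So $\xi=\frac34$ suffices.

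There is no real obstacle here. The only point needing care is that dominance must hold against arbitrary, possibly correlated, behavior of the others. This follows because the agent's grade depends only on her own report and her own score distribution, and that score enters the grade linearly.
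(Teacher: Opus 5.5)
Your proposal is correct and follows the paper's proof: \HR1 comes from Lemma~\ref{lemma:LV_g}, since the expected grade is an affine transformation of the proper scoring rule $R^2$; \HR2' comes from $\frac14+t^2-t=(t-\frac12)^2\geq 0$; and \HR3 is immediate from the definition because $2s_i-\frac34\geq -\frac34$. You also verify the lemma's formula directly and check the unverified-agent half of \HR2', both of which the paper leaves implicit.
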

    \begin{proof}
        Truthfulness (\HR1) follows immediately from Lemma~\ref{lemma:LV_g} since the expected grade is an affine transformation of $R^2$, and thus also a proper scoring rule.  
        As for \HR2', note that the expected grade of a truthful agent of type $t$ is $\frac{1}{4}+t^2$. It is easy to check that this is at least $t$.  \HR3 is immediate from the definition.
     \end{proof}
 Note that the constant boost of the grade cannot be lower than $\frac1{4}$, since for $t^*=\frac{1}{2}$ the expected grade is exactly $\frac12$.

     As for the soft requirements, it is immediate that $\BS(\lv)=\frac{1}{4}$. The expected verification [resp., bias] depend on the distribution $p$, but are easy to compute by integrating $t$ [resp., $\frac14+t^2-t$] over $p$. E.g. for a uniform type distribution we would get $\ver(\LV,p)=\frac12$ and $\bias(\LV,p)=\frac1{12}$.

    \paragraph{Reducing the bias}
    While increasing the parameter $\lin$ from 0 towards 1 clearly increases the amount of verification, we saw that the effect on the expected bias is non-monotone. Increasing it beyond~1 is unlikely to reduce the bias and therefore we need a different mechanism. 
    
     \paragraph{\underline{Polynomial Verification (PV) mechanism:}}
     The mechanism takes two parameters:
     \begin{itemize}
         \item an integer parameter $\ex\geq 1$;
         \item a real-valued parameter $\lin\in[0,1]$;
     \end{itemize} 
    and is defined as follows. 
\begin{itemize}
   \item $\PM^{\lin,\ex}_V(\hat t_i) :=  \lin \cdot(\hat  t_i)^{\frac{1}{\ex}}$.
    \item $\PM^{\lin,\ex}_G(\hat t_i,\bot) := \frac{\ex^\ex}{(\ex+1)^{\ex+1}}+\frac{1}{\ex\cdot \lin^{\ex+1}}\sum_{\ell=1}^\ex (\lin \cdot \hat t_i)^{\ell/\ex}$.
    \item $\PM^{\lin,\ex}_G(\hat t_i, s_i) :=\frac{\ex^\ex}{(\ex+1)^{\ex+1}}+\frac{1}{\lin} (1+\frac{1}{\ex})s_i -\frac1{\ex\cdot\lin^{\ex+1}}$.
\end{itemize}
Note that for $\ex=1,\lin=1$ the PV mechanism coincides with the LV mechanism.\footnote{We can also see that when $\lin=1,\ex\rightarrow\infty$ the mechanism approaches $\verall$. We get  $\payall$ when $\lin=\ex=0$ although we did not formally allow $\ex=0$.} Just like the LV mechanism, any PV mechanism implements a (semi-polynomial) scoring rule. See the dash-dotted purple line in Fig.~\ref{fig:PV}.

\begin{lemma}\label{lemma:PV_g}
    $\bar g_t(\hat t) = \frac{\ex^\ex}{(\ex+1)^{\ex+1}}+  R^{1+\frac{1}{\ex}}(\hat t,  t)$;
\end{lemma}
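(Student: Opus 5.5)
The plan is a direct computation of $\bar g_t(\hat t)$ from \eqref{eq:g}, as in Lemma~\ref{lemma:LV_g}, followed by a telescoping of the geometric sum in the default grade. Write $x:=\hat t$, $c:=\frac{\ex^\ex}{(\ex+1)^{\ex+1}}$, and $r:=\PM^{\lin,\ex}_V(x)=\lin x^{1/\ex}$ for the audit probability. Both grading branches contain the additive constant $c$, and the weights $(1-r)$ and $r$ sum to $1$. So $c$ passes through unchanged, and it remains to show that the rest of the expected grade equals $R^{1+\frac1\ex}(x,t)=(1+\frac1\ex)\,t\,x^{1/\ex}-\frac1\ex x^{1+1/\ex}$.

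First I handle the verified branch. Since $\PM^{\lin,\ex}_G(x,s)$ is affine in $s$ and $E[s]=t$, linearity of expectation gives
\[
r\Big(\tfrac1\lin(1+\tfrac1\ex)t-\tfrac1{\ex\lin^{\ex+1}}\Big)=(1+\tfrac1\ex)\,t\,x^{1/\ex}-\tfrac{r}{\ex\lin^{\ex+1}}.
\]
This already produces the linear-in-$t$ term of $R^{1+\frac1\ex}$. No further assumptions on $T_i$ are needed, which matches the model.

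Next I handle the unverified branch together with the leftover term $-\frac{r}{\ex\lin^{\ex+1}}$. I need
\[
\frac{1}{\ex\lin^{\ex+1}}\Big[(1-r)S-r\Big]=-\tfrac1\ex x^{1+1/\ex},
\]
where $S$ is the sum in $\PM^{\lin,\ex}_G(x,\bot)$. I read that sum as the geometric series $\sum_{\ell=1}^{\ex} r^\ell$ in the audit probability $r=\lin x^{1/\ex}$, i.e. terms $\lin^\ell x^{\ell/\ex}$, and I take this to be what the notation $(\lin\cdot\hat t_i)^{\ell/\ex}$ is meant to express. With this reading, $(1-r)S=r-r^{\ex+1}$, so $(1-r)S-r=-r^{\ex+1}=-\lin^{\ex+1}x^{1+1/\ex}$. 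Dividing by $\ex\lin^{\ex+1}$ yields exactly $-\frac1\ex x^{1+1/\ex}$. Adding back $c$ gives the claimed formula. As a check, the case $\ex=\lin=1$ reduces to the computation behind Lemma~\ref{lemma:LV_g}.

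The main obstacle is this interpretation of the sum. If the terms were literally $(\lin x)^{\ell/\ex}$, the ratio of the series would be $(\lin x)^{1/\ex}\neq r$ whenever $\lin\neq1$, and the telescoping would fail. I would therefore first confirm, using the case $\ex=1$ (where both readings agree), that the telescoped form is the intended one. I would then state the lemma under that reading, noting that the design choice of a geometric default grade in $q_{\hat t}$ is exactly what cancels $(1-q)$ against the sum.
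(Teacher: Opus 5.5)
Your proposal is correct and is essentially the paper's own proof. The paper expands the expected grade, applies linearity in $s$ on the verified branch, and telescopes $(1-r)\sum_{\ell=1}^{\kappa} r^{\ell} = r - r^{\kappa+1}$ with $r=\theta\hat t^{1/\kappa}$, which cancels the $-\frac{r}{\kappa\theta^{\kappa+1}}$ term. The paper's proof also silently writes the summands as $(\theta\hat t^{1/\kappa})^{\ell}$, so your geometric reading is the one it relies on. You are right that under the literal definition $(\theta\hat t)^{\ell/\kappa}$ the identity fails for $\kappa\ge 2$ and $\theta<1$.
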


\begin{lemma}\label{lemma:ex_bound}
    $\frac{\ex^\ex}{(\ex+1)^{\ex+1}}+t^{1+\frac1\ex} \in [t,t+\frac1{\ex}]$.
\end{lemma}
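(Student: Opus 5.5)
Write $c_\ex:=\frac{\ex^\ex}{(\ex+1)^{\ex+1}}$ and $h(t):=t-t^{1+\frac1\ex}$ for $t\in[0,1]$. The statement is then equivalent to $c_\ex\ge h(t)$ and $c_\ex-h(t)\le \frac1\ex$ for all $t\in[0,1]$. The plan is to show that $c_\ex$ is exactly $\max_{t\in[0,1]}h(t)$, which gives the lower bound, and then to bound $c_\ex$ directly, which gives the upper bound.

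For the lower bound I would do a one-variable maximization. The function $h$ is concave on $[0,1]$, since $t^{1+1/\ex}$ is convex. It vanishes at $0$ and $1$. Setting $h'(t)=1-(1+\frac1\ex)t^{1/\ex}=0$ gives the unique interior critical point $t^*=\left(\frac{\ex}{\ex+1}\right)^{\ex}$. Then
\[
h(t^*)=t^*\Big(1-(t^*)^{1/\ex}\Big)=\Big(\tfrac{\ex}{\ex+1}\Big)^{\ex}\cdot\tfrac{1}{\ex+1}=c_\ex .
\]
So $h(t)\le c_\ex$ for all $t$, i.e.\ $c_\ex+t^{1+1/\ex}\ge t$. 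Equality holds at $t^*$, which matches the remark after the LV corollary: for $\ex=1$ we get $t^*=\frac12$ and $c_1=\frac14$.

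For the upper bound, $h\ge 0$ on $[0,1]$ because $t^{1+1/\ex}\le t$ there. Hence $c_\ex-h(t)\le c_\ex$, and it suffices to show $c_\ex\le\frac1\ex$. We have $c_\ex=\frac{1}{\ex+1}\left(\frac{\ex}{\ex+1}\right)^\ex\le\frac1{\ex+1}<\frac1\ex$. (In fact the tighter bound $c_\ex\le\frac{1}{e\ex}$ holds, but it is not needed.)

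The only step needing any care is identifying the maximizer $t^*$ and simplifying $h(t^*)$ into the closed form $c_\ex$. Everything else is elementary monotonicity and sign checks. I would also check the edge cases: at $t=0$ and $t=1$ the expression equals $c_\ex$ and $1+c_\ex$ respectively, and both lie in the claimed intervals.
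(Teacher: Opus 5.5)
Your proof is correct and takes the same route as the paper. The lower bound comes from maximizing the nonnegative concave function $t-t^{1+1/\kappa}$ at $t^*=(\frac{\kappa}{\kappa+1})^{\kappa}$, where it equals exactly $\frac{\kappa^\kappa}{(\kappa+1)^{\kappa+1}}$, and the upper bound from $t^{1+1/\kappa}\le t$ together with $\frac{\kappa^\kappa}{(\kappa+1)^{\kappa+1}}\le\frac{1}{\kappa+1}$; reading the lemma as stated for $t\in[0,1]$ is right, since the upper bound fails for $t>1$.
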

\begin{proof}
    Clearly, $\frac{\ex^\ex}{(\ex+1)^{\ex+1}}$ is positive and upper bounded by $\frac{1}{\ex+1}$.
    Now consider $t-t^{1+\frac1\ex}$. This term is nonnegative and concave. By derivation, the maximum is obtained at $(\frac{\ex}{\ex+1})^{\ex}$. Thus
    \begin{align*}
        t-t^{1+\frac1\ex} &\leq (\frac{\ex}{\ex+1})^{\ex}- (\frac{\ex}{\ex+1})^{\ex(1+\frac1\ex)}\\
        &=(\frac{\ex}{\ex+1})^{\ex}- (\frac{\ex}{\ex+1})^{\ex+1} = \frac{\ex^{\ex}(\ex+1)}{(\ex+1)^{\ex+1}}-\frac{\ex^{\ex+1}}{(\ex+1)^{\ex+1}}\\
        &=\frac{\ex^{\ex+1}+ \ex^{\ex}-\ex^{\ex+1}}{(\ex+1)^{\ex+1}} = \frac{ \ex^{\ex}}{(\ex+1)^{\ex+1}}.\qedhere
    \end{align*} 
\end{proof}
 \begin{proposition}
        The PV mechanism satisfies \HR1 and   \HR2'. It satisfies \HR3 iff       $\frac1{\ex\cdot \lin^{\ex+1}}-  \frac{\ex^\ex}{(\ex+1)^{\ex+1}} \leq \xi$.
    \end{proposition}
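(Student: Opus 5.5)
The plan is to mirror the argument used for the LV mechanism, relying on Lemma~\ref{lemma:PV_g} and Lemma~\ref{lemma:ex_bound}. For \HR1, Lemma~\ref{lemma:PV_g} states that the expected grade of a type-$t$ agent reporting $\hat t$ equals $\frac{\ex^\ex}{(\ex+1)^{\ex+1}}+R^{1+\frac1\ex}(\hat t,t)$. This is an additive constant plus the semi-power scoring rule with $\alpha=1+\frac1\ex>1$. Since $R^\alpha$ is proper, its expectation is maximized at $\hat t=t$, so truthful reporting is weakly dominant regardless of what others report. (Each agent's grade depends only on her own report and her own sample, so the reports of others play no role.)

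For \HR2', a truthful agent's expected grade is $\frac{\ex^\ex}{(\ex+1)^{\ex+1}}+t^{1+\frac1\ex}$, because $E_{s\sim T}[R^\alpha(t,s)]=t^\alpha$. By Lemma~\ref{lemma:ex_bound} this is at least $t$, so the bias is nonnegative. \HR2' also asks that unverified agents receive at least $\hat t$. I will check this from the formula for $\PM_G(\hat t,\bot)$ by keeping only the $\ell=\ex$ term of the sum. That term is $\frac{\lin\hat t}{\ex\lin^{\ex+1}}=\frac{\hat t}{\ex\lin^\ex}$, which is at least $\hat t/\ex$ because $\lin\le1$. However, this alone does not give $\hat t$, so I will also use the remaining terms. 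Each term satisfies $(\lin\hat t)^{\ell/\ex}/\lin^{\ex+1}\ge \lin\hat t/\lin^{\ex+1}\ge\hat t$, since $\lin\hat t\le 1$ and $\ell/\ex\le1$. Averaging the $\ex$ terms with the factor $\frac1\ex$ therefore gives at least $\hat t$, and the positive constant only adds to this. This step is the one I expect needs the most care, but the termwise bound $(\lin\hat t)^{\ell/\ex}\ge\lin\hat t$ should settle it.

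For \HR3 I will find the minimal realized grade. The unverified grade is positive, so only the verified grade $\frac{\ex^\ex}{(\ex+1)^{\ex+1}}+\frac1\lin(1+\frac1\ex)s-\frac1{\ex\lin^{\ex+1}}$ can be negative. It is increasing in $s$, and $s$ ranges over $\calT$, which may include $s=0$. At $s=0$ it equals $\frac{\ex^\ex}{(\ex+1)^{\ex+1}}-\frac1{\ex\lin^{\ex+1}}$. So every realized grade is at least $-\xi$ if and only if $\frac1{\ex\lin^{\ex+1}}-\frac{\ex^\ex}{(\ex+1)^{\ex+1}}\le\xi$. For the ``only if'' direction, the minimum is attained because a verified agent can realize $s=0$, which follows since $T_i$ is an arbitrary distribution on $\calT\ni0$ (for example, for a type-$0$ agent).
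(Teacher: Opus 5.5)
Your proposal is correct and follows the paper's proof (\HR1 via Lemma~\ref{lemma:PV_g}, \HR2' via Lemma~\ref{lemma:ex_bound}). It also supplies two checks the paper only asserts: that unverified agents receive at least $\hat t$, and the derivation of the \HR3 threshold from the verified grade at $s=0$.

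Two small remarks. First, the proof of Lemma~\ref{lemma:PV_g} actually uses the sum $\sum_{\ell=1}^{\ex}(\lin\hat t^{1/\ex})^{\ell}$, because the telescoping fails for the printed form $(\lin\hat t)^{\ell/\ex}$ when $\lin<1$ and $\ex\ge 2$. Your termwise bound still holds for that form, since each term divided by $\lin^{\ex+1}$ equals $\lin^{\ell-\ex-1}\hat t^{\ell/\ex}\ge\hat t$. Second, a type-$0$ agent is verified only if she misreports (truthfully $q_0=0$). This is harmless, since \HR3 constrains every realized grade.
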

    \begin{proof}
        Truthfulness (\HR1) follows immediately from Lemma~\ref{lemma:PV_g} since the expected grade is an affine transformation of $R^{1+\frac1\ex}$, and thus also a proper scoring rule.  
        As for \HR2', note that the expected grade of a truthful agent of type $t$ is $\frac{\ex^\ex}{(\ex+1)^{\ex+1}}+t^{1+\frac1\ex}$. By Lemma~\ref{lemma:ex_bound} this is at least $t$.
        \HR3 is respected by our restriction on the parameters.
     \end{proof}

As for the soft requirements, it is easy to see that the audit probability is monotonically increasing in both parameters. As $\ex\rightarrow \infty$, it approaches a constant function that verifies every agent (unless $\hat t_i=0)$ w.p. $\lin$.

We observe that the bias does not depend on $\lin$ at all, which may at first seem surprising. 
The bias for a truthful type-$t$ agent is  $\frac{\ex^\ex}{(\ex+1)^{\ex+1}}+t^{1+\frac1\ex}-t$. By Lemma~\ref{lemma:ex_bound} this is at most $\frac{1}{\ex+1}$ and thus both the maximal and the expected bias go to 0 as $\ex\rightarrow \infty$.

Therefore by selecting sufficiently large $\ex$ and sufficiently small $\lin$, we approach a `perfect' mechanism with no verification and no bias. However we must not forget that the parameters are restricted by the maximal allowed penalty $\xi$. Indeed, ignoring this restriction provides us with the noisy counterpart of the $\hgp$ mechanism. 

\paragraph{From two parameters to one}
By the above discussion, once $\xi$ is fixed and $\ex$ was selected, we should choose the minimal $\lin$ that satisfies \HR3.  Denote 
$\lin^*(\xi,\ex):=\left((1+1/\ex)^{-(\ex+1)}+\ex\cdot\xi\right)^{-1/(\ex+1)}$.
\begin{corollary}
    For any $\xi\geq 0$ and $\kappa \in \mathbb Z_+$, any valid $\PM^{\lin,\ex}$ mechanism is dominated by  $\PM^{\lin^*(\xi,\ex),\ex}$. 
\end{corollary}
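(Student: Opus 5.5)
The plan is to fix $\xi\ge 0$ and $\ex$, and compare a valid $\PM^{\lin,\ex}$ with $\PM^{\lin^*,\ex}$ on each soft requirement separately, where $\lin^*=\lin^*(\xi,\ex)$. Three facts established above do the work: the bias of any PV mechanism is independent of $\lin$; the audit probability $\lin\cdot\hat t^{1/\ex}$ is monotone in $\lin$; and \HR3 holds iff $\frac1{\ex\lin^{\ex+1}}-\frac{\ex^\ex}{(\ex+1)^{\ex+1}}\le\xi$.

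\textbf{Step 1: $\lin^*$ is the smallest admissible $\lin$.} I first check that $\lin^*$ is exactly the threshold. The map $\lin\mapsto \frac1{\ex\lin^{\ex+1}}$ is strictly decreasing on $(0,1]$, so the \HR3 condition reads $\lin^{\ex+1}\ge \big(\ex\,(\xi+\frac{\ex^\ex}{(\ex+1)^{\ex+1}})\big)^{-1}$. Since $\ex\cdot\frac{\ex^\ex}{(\ex+1)^{\ex+1}}=(\frac{\ex}{\ex+1})^{\ex+1}=(1+1/\ex)^{-(\ex+1)}$, this is equivalent to $\lin\ge\lin^*(\xi,\ex)$. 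Hence any valid $\PM^{\lin,\ex}$ has $\lin\ge\lin^*$, and $\PM^{\lin^*,\ex}$ is itself valid. HR1 and HR2' hold for every $\lin$ by the preceding proposition.

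One edge case needs care: $\lin^*$ might exceed $1$, which happens when $\xi$ is small (e.g.\ for $\ex=1$, $\xi=0$, where $\lin^*=2$). In that case no valid mechanism with $\lin\in[0,1]$ exists, and the statement holds vacuously. I expect this parameter-range bookkeeping to be the only subtle point.

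\textbf{Step 2: verification.} For every report $\hat t$ we have $\lin^*\hat t^{1/\ex}\le\lin\hat t^{1/\ex}$. Taking expectations over $p$ gives $\ver(\PM^{\lin^*,\ex},p)\le\ver(\PM^{\lin,\ex},p)$ for every $p$.

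\textbf{Step 3: bias.} By Lemma~\ref{lemma:PV_g}, a truthful type-$t$ agent's expected grade is $\frac{\ex^\ex}{(\ex+1)^{\ex+1}}+t^{1+1/\ex}$ for both mechanisms. This uses $E[R^\alpha(t,s)]=t^\alpha$ and does not involve $\lin$. So $\BS$ and $\bias(\cdot,p)$ coincide for the two mechanisms.

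Together, Steps 2 and 3 give weak domination on all soft requirements for every $p$, with strict improvement in verification whenever $\lin>\lin^*$ and $p$ puts mass on $t>0$.
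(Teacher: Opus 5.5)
Your proposal is correct and follows the same argument as the paper. Any $\lin<\lin^*(\xi,\ex)$ violates \HR3, while any $\lin\ge\lin^*(\xi,\ex)$ only increases the audit probability $\lin\hat t^{1/\ex}$ and, by Lemma~\ref{lemma:PV_g}, leaves the truthful expected grade and hence both bias measures unchanged; your algebra identifying $\lin^*$ as the exact \HR3 threshold and your remark that the statement is vacuous when $\lin^*>1$ (e.g.\ $\ex=1$, $\xi=0$) fill in details the paper leaves implicit.
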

This is simply since mechanisms with $\lin<\lin^*(\ex,\xi)$ violate \HR3 (thus not valid), and mechanisms with $\lin>\lin^*(\ex,\xi)$ have higher verification rate and same bias.  

Thus, as in the deterministic verification case, we get a three-way tradeoff: fixing the maximal penalty $\xi$, the (integer) parameter $\ex$ balances  verification vs. bias. However in the PM class, allowing negative grades is essential. We conjecture that under limited liability, no valid mechanism is better than a convex combination of $\verall$ and $\payall$. 

In contrast to the deterministic case, we have no guarantee that the PM class provides the best possible tradeoff. 
However, the  class provides a strong baseline for mechanisms that balance between verification and bias. 

\begin{figure}
    \centering
\newcommand{\gammaval}{2.}
\newcommand{\gammafix}{0.02}
\newcommand{\xival}{0.75}
\newcommand{\linval}{0.8}
\newcommand{\maxt}{6.}
\begin{tikzpicture}[yscale=0.40]
    \draw[->] (0,0) -- (\maxt+.5,0) node[right] {$t$};
    \draw[->] (0,-5.5) -- (0,\maxt+4) node[above] {};

     \draw (\maxt, -0.15) -- (\maxt, 0.15);

        \node[below] at (\maxt, -0.15) {$1$};
                \draw (-0.15, 0) -- (0.15, 0);

                        \draw (-0.15,\maxt) -- (0.15,\maxt);

        \draw (-0.15,4/27*\maxt) -- (0.15,4/27*\maxt);

    \node[left] at (-0.1, 4/27*\maxt) {$\frac{4}{27}$};
        \node[left] at (-0.15, 0) {$0$};
    \node[left] at (-0.15, \maxt) {$1$};
    \node[left] at (-0.15, -0.85*\maxt) {$-\xi$};
    
    \draw[dotted, thick] (0,0) -- (\maxt,\maxt);

\draw[thick, green!60!black, domain=0:\maxt, smooth, variable=\t]  plot ({\t},{(\maxt*4/27)+1/(2*\linval^3)*((\linval*\t*\maxt)^(1/2)+(\linval*\t))});

\draw[thin, double, red, domain=0:\maxt, smooth, variable=\t] 
    plot ({\t}, {\linval*sqrt(\maxt*\t)});
  \fill[red!20, opacity=0.4, domain=0:\maxt, smooth, variable=\t] 
    (0,0) -- plot ({\t}, {\linval*sqrt(\maxt*\t)}) --  (\maxt,0) -- cycle;
\draw[thick, orange, dashed] 
    (0,{(4/27-1/(2*\linval^3))*\maxt}) -- (\maxt, {(4/27-1/(2*\linval^3)+1.5/\linval)*\maxt}); 

\draw[thick, purple, dash dot, domain=0:\maxt, smooth, variable=\t] 
    plot ({\t}, {4/27*\maxt + \t*(\t/\maxt)^(0.5) });
      
      \fill[gray, opacity=0.2, domain=0:\maxt, smooth, variable=\t] 
   plot ({\t},  {4/27*\maxt + \t*(\t/\maxt)^(0.5) }) -- (\maxt,\maxt) -- (0,0) -- cycle;
    
\begin{scope}[shift={(3.2,-2)}] 
    \draw[thick] (-0.3, -3.5) rectangle (2.6, 0.5); 

    \draw[thick, green!60!black] (0,0) -- (0.6,0);
    \node[right] at (0.7,0) {$g_i$ for $\bot$}; 

    \draw[dashed, thick, orange] (0,-0.9) -- (0.6,-0.9);
    \node[right] at (0.7,-0.9) {$g_i$ for $s_i=\hat t_i$};

    \draw[thick, purple, dash dot]
     (0,-1.9) -- (0.6,-1.9);
    \node[right] at (0.7,-1.9) {$\bar g_t$}; 

        \draw[thick, double, red]
     (0,-2.8) -- (0.6,-2.8);
    \node[right] at (0.7,-2.8)  {$q_i$}; 

    \draw[thick, dotted]
     (1.4,-2.8) -- (2,-2.8);
    \node[right] at (2.1,-2.8)  {$t$}; 

\end{scope}
\end{tikzpicture}
\caption{Visualization of the Polynomial Verification mechanism with $\ex=2, \lin=0.5$.  The purple dash-dot line shows the expected grade of a truthful agent. As in Fig.~\ref{fig:MCV}, the shaded areas show the overall verification and bias under a uniform type distribution. }
    \label{fig:PV}
\end{figure}
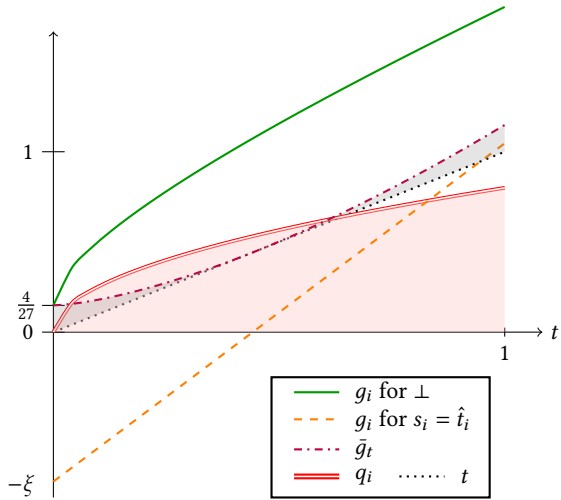
 
\subsection{Alternative Approach:  Type Histogram}\label{sec:det_hist}

The MCV mechanism can work without any information on the type distribution, but can exploit such information to better optimize the bias. 

In this last section, we will assume  the set of possible types is discrete with $|t-t'|>\eps$ for any $t\neq t'$. We further assume we have access to the \emph{exact type histogram} (i.e., we know exactly the number of agents of each type, denoted $H(t):=|\{i\in N: t_i=t\}|$), and show a `toy mechanism' that exploits this information to provide a `perfect' outcome: a unique Nash equilibrium in which \emph{no agent is verified}, and the \emph{maximal bias is arbitrarily small}, even when verification is noisy. 

The equilibrium is not in dominant strategies, and thus the mechanism does not belong in the same class as the previous mechanisms, relaxing \HR1 to \HR1': there is a unique Nash equilibrium, which is truthful. The mechanism also holds \HR2' and limited liability.\footnote{In fact, the mechanism holds a stronger version of \HR2', since truthful agents get their at least type in expectation whether they are verified or not.}

\paragraph{\underline{The Histogram Mechanism:}} The mechanism $\HM^\alpha$ takes a parameter $\alpha\in (1, 1+\eps)$ as input, as well as the histogram $H$ of all agents' types (as ``additional information''). Define $\kappa:=\lceil \frac{2}{\alpha-1}\rceil>\frac2{\eps}$.
\begin{itemize}
    \item Set $\hat H(t):=|\{i:\hat t_i=t\}|$;
    \item $t_V\set \max\{t \in \calT \text{ s.t. }\hat H(t)>H(t)\}$; // (maximal violating type if exists)
      \item Set $\HM^\alpha_V(\hat t_i) = 1$ if $\hat t_i=t_V$ and $0$ otherwise;
    \item $\HM^\alpha_G(\hat t_i,\bot)=\eps +\hat t_i$;
    \item $\HM^\alpha_G(\hat t_i, s)=\eps +R^{1+\frac1\kappa}(\hat t_i,s)$.
\end{itemize}
\begin{proposition}\label{prop:hist}
    The histogram mechanism $\HM^\alpha$ respects limited liability, and has a unique Nash equilibrium, in which all agents are truthful and none are verified. Moreover, for any truthful type~$t$ agent, $\bar g_t\in [t, t+\eps]$ even off-equilibrium. 
\end{proposition}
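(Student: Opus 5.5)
The plan is to derive everything from two facts established earlier: properness of $R^{1+\frac1\kappa}$, in the form $E_{s\sim T}[R^{1+\frac1\kappa}(x,s)]\le t^{1+\frac1\kappa}$ with equality iff $x=t$ (strictness of the semi-power rule), and Lemma~\ref{lemma:ex_bound}. Since $\kappa>\frac2\eps$, Lemma~\ref{lemma:ex_bound} gives
\[
t-\tfrac{\eps}{2}\;<\;t-\tfrac{1}{\kappa+1}\;\le\; t^{1+\frac1\kappa}\;\le\; t .
\]
I will prove the three claims in order: limited liability, the bias bound, and the equilibrium claims.

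\emph{Limited liability.} Unverified agents get $\eps+\hat t_i\ge 0$. For verified agents, with $x,s\in[0,1]$ we have $R^{1+\frac1\kappa}(x,s)\ge -\frac1\kappa x^{1+\frac1\kappa}\ge-\frac1\kappa>-\eps$. Hence every realized grade is nonnegative.

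\emph{Bias off-equilibrium.} Take a truthful type-$t$ agent, whatever the others report. If she is unverified she gets exactly $t+\eps$. If she is verified, her expected grade is $\eps+t^{1+\frac1\kappa}$, which lies in $[t,t+\eps]$ by the display above. Her expected grade is a mixture of these two cases, so $\bar g_t\in[t,t+\eps]$.

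\emph{The truthful profile is an equilibrium.} Under truth-telling $\hat H=H$, so $t_V$ is undefined and nobody is verified; each agent gets $\eps+t_i$. Suppose $i$ unilaterally deviates to $t'\ne t_i$. Then $t'$ is the unique violating type, so $t_V=t'$ and $i$ is verified with probability one. Her expected grade is $\eps+E[R^{1+\frac1\kappa}(t',s)]<\eps+t_i^{1+\frac1\kappa}\le\eps+t_i$, so the deviation is unprofitable.

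\emph{Uniqueness.} Fix a non-truthful profile. I split into two cases and exhibit a strictly profitable deviation in each.

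In the first case, $\hat H\neq H$. Then $t_V$ exists, and since $\hat H(t_V)>H(t_V)$, some agent $i$ reports $t_V\neq t_i$. She is verified, so she currently gets $\eps+E[R(t_V,s)]<\eps+t_i^{1+\frac1\kappa}$ by strict properness. Let her deviate to truth. If she is then verified she gets $\eps+t_i^{1+\frac1\kappa}$; if not, she gets $\eps+t_i\ge\eps+t_i^{1+\frac1\kappa}$. Either way she gains strictly.

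In the second case, $\hat H=H$ but some agent lies. Then the reports are a non-identity permutation of the types, so some agent $j$ has $\hat t_j<t_j$, hence $\hat t_j\le t_j-\eps$ by the type-gap assumption. Currently nobody is verified and $j$ gets $\eps+\hat t_j\le t_j$. If $j$ deviates to $t_j$, then $t_j$ becomes the only violating type and $j$ is verified. She then gets $\eps+t_j^{1+\frac1\kappa}>t_j+\frac\eps2>\eps+\hat t_j$, again a strict gain.

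The main obstacle is exactly this second case. There the histogram check cannot detect the lie, and one must use both the discreteness gap $\eps$ and the choice $\kappa>\frac2\eps$ (through Lemma~\ref{lemma:ex_bound}) to show that being verified under the scoring rule beats under-reporting.
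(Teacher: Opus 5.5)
Your proof is correct, and for the uniqueness part it takes a different route from the paper. The limited-liability, bias and truthful-equilibrium parts match the paper's proof.

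\textbf{How the paper splits the non-truthful profiles.} The paper splits by the \emph{direction} of the lie.
\begin{itemize}
\item In Case~I some agent under-reports. The $\varepsilon$-gap and $\kappa>2/\varepsilon$ are used to argue that downward lies are strictly dominated, whatever the others report.
\item In Case~II all lies are upward. The paper then needs a structural argument that the maximal over-reported type $\bar t$ equals $t_V$: nobody of true type $\bar t$ or above can be lying. Only then does it know the maximal over-reporter is verified and loses by properness.
\end{itemize}

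\textbf{How your split differs.} You split by whether the histogram \emph{detects} the lie.
\begin{itemize}
\item When $\hat H\neq H$, pigeonhole alone puts some liar at $t_V$, and strict properness finishes the argument for a lie in either direction. You never need to identify which liar sits at $t_V$.
\item Only when $\hat H=H$ do you need the gap and Lemma~\ref{lemma:ex_bound}. There the reports permute the types, so some agent under-reports and nobody is verified.
\end{itemize}

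\textbf{A subcase your split handles more cleanly.} An under-reporter may herself be at $t_V$ and verified. Her expected grade is then $\varepsilon+R^{1+\frac1\kappa}(\hat t_i,t_i)$. This is not bounded by $\varepsilon+\hat t_i$ as the paper's Case~I asserts; for large $\kappa$ it is close to $\varepsilon+t_i$. The paper's conclusion still holds there, but only through strict properness, which is exactly your Case~1 argument.

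\textbf{What the paper's route buys.} Once that subcase is patched, it gives the slightly stronger structural fact that under-reporting is strictly dominated for every agent, independently of the others' reports. Your argument gives only a profitable deviation profile by profile, which is all that uniqueness requires.
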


Intuitively, the proof relies on two observations: first, in every profile where agents only lie upwards, there is at least one cheater that is verified with certainty, and thus experiences the scoring rule fully.
Second, by  Lemma~\ref{lemma:ex_bound} the proper scoring rule $R^{1+\frac1\kappa}$ implements a near-identity function when $\alpha$ is close to 1 (and thus $\kappa$ is large), thus approximating agents' true types to arbitrary precision. 




\section{Empirical Demonstration}\label{sec:sim}

In this section, we demonstrate the triple tradeoff between maximal penalty, average bias and verification for the MCV and PV classes. Since the tradeoff relies on an actual type distribution we employ both synthetic and real-world data.

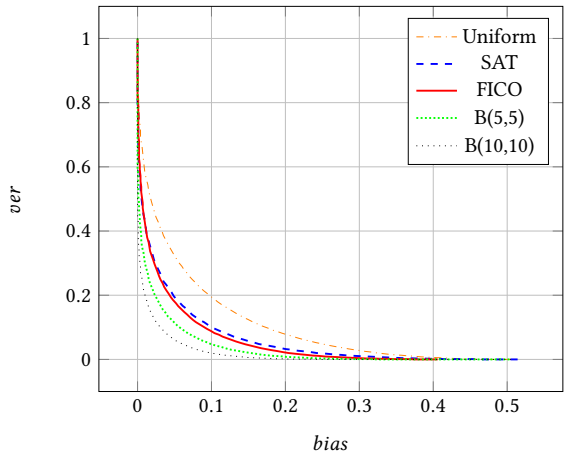
\begin{figure}[t]
    \centering

        \begin{tikzpicture}[scale=0.9]
\begin{axis}[
    xlabel={$bias$},
    ylabel={$ver$},
    grid=both,
    legend pos=north east,
    unbounded coords=discard,
    ymin=-0.1, ymax=1.1,
]


\addplot+[dashdotted, orange, no marks]
table[
    col sep=comma, x=e, y=mu,
]{figures/csv/MCV_xi0_B1_1.csv};
\addlegendentry{Uniform}

\addplot+[thick, dashed, blue, no marks]
table[
    col sep=comma, x=e, y=mu,
]{figures/csv/MCV_xi0_SAT.csv};
\addlegendentry{SAT}

\addplot+[thick, red, no marks]
table[
    col sep=comma,
    x=e, y=mu,
]{figures/csv/MCV_xi0_FICO.csv};
\addlegendentry{FICO}

\addplot+[thick, densely dotted, green, no marks]
table[
    col sep=comma, x=e, y=mu,
]{figures/csv/MCV_xi0_B5_5.csv};
\addlegendentry{B(5,5)}

\addplot+[thin, dotted, black, no marks]
table[
    col sep=comma, x=e, y=mu,
]{figures/csv/MCV_xi0_B10_10.csv};
\addlegendentry{B(10,10)}

\end{axis}
\end{tikzpicture}

    \caption{Efficiency curves for MCV mechanism under \textit{Limited Liability} on several synthetic  and real type distributions. The $\gamma$ parameter ranges from 0 to 1, with higher values to the right.\vspace{-0mm}} 
    \label{fig:MCV_tradeoff}
\end{figure}

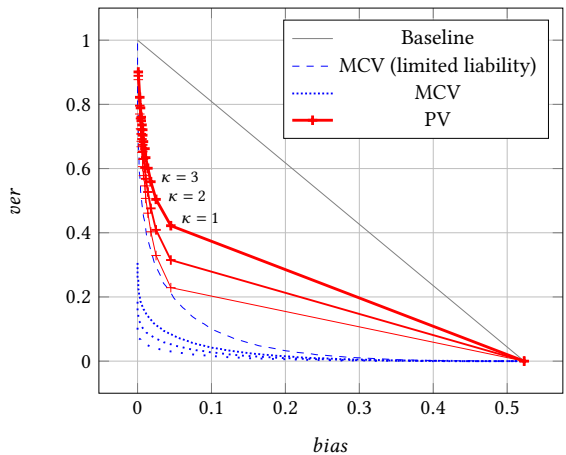
\begin{figure}[t]
    \centering
    \begin{tikzpicture}[scale=0.9]
\begin{axis}[
    xlabel={$bias$},
    ylabel={$ver$},
    grid=both,
    legend pos=north east,
    unbounded coords=discard,
    ymin=-0.1, ymax=1.1,
]


\addplot+[thin, gray, no marks]
table[
    col sep=comma, x=e, y=mu,
]{figures/csv/baseline_SAT.csv};
\addlegendentry{Baseline}

\addplot+[ dashed, blue, no marks]
table[
    col sep=comma, x=e, y=mu,
]{figures/csv/MCV_xi0_SAT.csv};
\addlegendentry{MCV (limited liability)}

\addplot+[thick, densely dotted, blue, no marks]
table[
    col sep=comma, x=e, y=mu,
]{figures/csv/MCV_xi1_SAT.csv};
\addlegendentry{MCV}

\addplot+[thick,  dotted, blue, no marks, forget plot]
table[
    col sep=comma, x=e, y=mu,
]{figures/csv/MCV_xi2_SAT.csv};

\addplot+[thick,  loosely dotted, blue, no marks, forget plot]
table[
    col sep=comma, x=e, y=mu,
]{figures/csv/MCV_xi4_SAT.csv};

\addplot+[very thick, red, mark = +, mark size=2pt]
table[
    col sep=comma, x=e, y=mu,
]{figures/csv/PV_xi1_SAT.csv};
\addlegendentry{PV}

\addplot+[  thick, red, mark = +, mark size=2pt, forget plot]
table[
    col sep=comma, x=e, y=mu,
]{figures/csv/PV_xi2_SAT.csv};

\addplot+[thin, red, mark = +, mark size=2pt, forget plot]
table[
    col sep=comma, x=e, y=mu,
]{figures/csv/PV_xi4_SAT.csv};

\end{axis}
\node at (1.5,2.6) {\scriptsize{$\kappa=1$}};
\node at (1.3,2.92) {\scriptsize{$\kappa=2$}};
\node at (1.2,3.2) {\scriptsize{$\kappa=3$}};

\end{tikzpicture}
        \caption{Efficiency curves of MCV and PV mechanisms on the SAT type distribution. Lines are shown for $\xi=1,2,4$ (and also $\xi=0$ for MCV).  The gray line shows the convex combination of \verall~ and \payall.  \vspace{-0mm}}
    \label{fig:SAT_tradeoff}
\end{figure}

\paragraph{Data} 
As  real-world datasets, we used (a) SAT 2022 test scores: a standardized test widely used for college admissions in the United States, with statistics taken from~\cite{wikipediaSAT}; (b) FICO scores statistics from 2018~\cite{FICO}.

Additionally, we use the synthetic Beta distribution with various parameters. $B(1,1)$ is the Uniform distribution, and higher parameters correspond to lower variance, where MCV is expected to perform better. Both of our real-world datasets are shown in Fig.~\ref{fig:tax}, together with the best-fit Beta distributions. 

\paragraph{Hardware} We used a standard PC with intel Core i7-9700k CPU and 16GB RAM for running the simulations. The entire execution takes a couple of seconds. \rmr{seems redundant...}

\paragraph{Experimental setup}
We set the value of the maximal penalty parameter  $\xi$. 
Note that both mechanism classes essentially have a single parameter: $\gamma\in[0,1]$ for MCV, and $\kappa\in\mathbb Z_+$ for PV. 
For each type distribution $p$, and $M\in\{\M^{\gamma,\xi},\ \PM^{\lin(\xi,\kappa),\kappa}\}$ we compute the $\bias(\M,p)$ and $\ver(\M,p)$ for different values of the parameter in range. 
We select $\gamma$ as multiples of 0.01, and $\kappa\in \{1,\ldots,15,20,50\}$.

We compute $\bias()$ and $\ver()$ as \emph{expected values}, by integrating over the type distribution $p$. We do not sample the types, the verification decision, or (in the noisy setting) the performance of the verified agent.\footnote{In fact, we do not even need to define the performance distribution $T_i$, since we only use its expectation $t_i$.}


\subsection{Results} 

\paragraph{Deterministic verification}
Figure~\ref{fig:MCV_tradeoff}(top) shows
Efficiency curves of the MCV mechanism (i.e., under deterministic verification) for the various distributions, using $\xi=0$ (limited liability). 
Higher $\gamma$ values correspond to points towards the bottom right, i.e., with less verification and a higher bias. 

We can see that MCV performs better as the variance of $p$ decreases, and is doing much better on SAT and FICO than on the Uniform distribution. 

For example, in  SAT, we could guarantee an expected bias of less than 5\% of the grade, with verifying only 20\% of the students. Note that in the noiseless setting, the bias of the MCV mechanism also equals the error in grading (as the grade is deterministic). 

\paragraph{Triple tradeoff}
Fig.~\ref{fig:SAT_tradeoff} compares the performance we get under deterministic verification (using MCV) and noisy verification (using PV), for maximal penalty $\xi\in \{1,2,4\}$. The figure shows results on the SAT distribution, results on other distributions are similar.

We also show results for MCV with limited liability, which still does better than PV with a high maximal penalty. Note that the best we can do in the noisy setting with limited liability is the baseline. 

One thing that is evident from the figure is that we can significantly reduce the bias in both settings, but it is harder to control the verification amount in the noisy setting, as we currently have no solution between the Linear Verification mechanism ($\kappa=1$) and the \payall mechanism.

\section{Conclusion}
In this paper, we considered possible ways to incentivize agents to truly report their type, when a high type leads to better utility (as in the case of grading an exam).   While agents can lie freely about their type, the principal has certain verification power, that she prefers to use scarcely. 

Our main result characterizes the optimal tradeoff between required verification and maximal/expected bias, when verification is guaranteed to reveal the true type.  We note that our mechanism can also be used when agents have multi-dimensional types as in \cite{estornell2021incentivizing}, as long as allocation probability is only determined by the agent's own type (thus can be treated as `grade'), i.e., when there is no competition.

Noisy verification makes the task much more challenging, but still possible if we allow `negative grades', and at the cost of larger bias and/or more verification. We further showed how having access to the exact type histogram leads to mechanism that essentially eliminates both the bias and the required verification. 
While having such access is unlikely, we hope the ideas used in the proof could be useful  to better exploit information on the type distribution. 

The main remaining challenge is to prove a tight approximation bound for the noisy verification setting. 

\begin{acks}
The work of O. Ben{-}Porat was partially supported by the Israel Science Foundation (ISF; Grant No. 3079/24). 
We thank the anonymous reviewers for their valuable feedback.
\end{acks}
\bibliographystyle{ACM-Reference-Format} 
\bibliography{deterrence}

@inproceedings{ceppi2019partial,
  title={Partial verification as a substitute for money},
  author={Ceppi, Sofia and Kash, Ian and Frongillo, Rafael},
  booktitle={Proceedings of the AAAI Conference on Artificial Intelligence},
  volume={33},
  number={01},
  pages={1837--1844},
  year={2019}
}

@inproceedings{witkowski2012robust,
  title={A robust bayesian truth serum for small populations},
  author={Witkowski, Jens and Parkes, David},
  booktitle={Proceedings of the AAAI Conference on Artificial Intelligence},
  volume={26},
  number={1},
  pages={1492--1498},
  year={2012}
}

@article{savage1971elicitation,
  title={Elicitation of personal probabilities and expectations},
  author={Savage, Leonard J},
  journal={Journal of the American Statistical Association},
  volume={66},
  number={336},
  pages={783--801},
  year={1971},
  publisher={Taylor \& Francis}
}

@article{carroll2019robust,
  title={Robust incentives for information acquisition},
  author={Carroll, Gabriel},
  journal={Journal of Economic Theory},
  volume={181},
  pages={382--420},
  year={2019},
  publisher={Elsevier}
}

@inproceedings{epitropou2019optimal,
  title={Optimal on-line allocation rules with verification},
  author={Epitropou, Markos and Vohra, Rakesh},
  booktitle={Algorithmic Game Theory: 12th International Symposium, SAGT 2019, Athens, Greece, September 30--October 3, 2019, Proceedings 12},
  pages={3--17},
  year={2019},
  organization={Springer}
}

@article{ben2014optimal,
  title={Optimal allocation with costly verification},
  author={Ben-Porath, Elchanan and Dekel, Eddie and Lipman, Barton L},
  journal={American Economic Review},
  volume={104},
  number={12},
  pages={3779--3813},
  year={2014},
  publisher={American Economic Association 2014 Broadway, Suite 305, Nashville, TN 37203}
}

@article{mylovanov2017optimal,
  title={Optimal allocation with ex post verification and limited penalties},
  author={Mylovanov, Tymofiy and Zapechelnyuk, Andriy},
  journal={American Economic Review},
  volume={107},
  number={9},
  pages={2666--2694},
  year={2017},
  publisher={American Economic Association 2014 Broadway, Suite 305, Nashville, TN 37203}
}

@article{chua2023optimal,
  title={Optimal multi-unit allocation with costly verification},
  author={Chua, Geoffrey A and Hu, Gaoji and Liu, Fang},
  journal={Social Choice and Welfare},
  volume={61},
  number={3},
  pages={455--488},
  year={2023},
  publisher={Springer}
}

@article{akasiadis2017cooperative,
  title={Cooperative electricity consumption shifting},
  author={Akasiadis, Charilaos and Chalkiadakis, Georgios},
  journal={Sustainable Energy, Grids and Networks},
  volume={9},
  pages={38--58},
  year={2017},
  publisher={Elsevier}
}

@article{selten1998axiomatic,
  title={Axiomatic characterization of the quadratic scoring rule},
  author={Selten, Reinhard},
  journal={Experimental Economics},
  volume={1},
  pages={43--61},
  year={1998},
  publisher={Springer}
}

@article{gneiting2007strictly,
  title={Strictly proper scoring rules, prediction, and estimation},
  author={Gneiting, Tilmann and Raftery, Adrian E},
  journal={Journal of the American statistical Association},
  volume={102},
  number={477},
  pages={359--378},
  year={2007},
  publisher={Taylor \& Francis}
}

@article{prelec2004bayesian,
  title={A Bayesian truth serum for subjective data},
  author={Prelec, Drazen},
  journal={science},
  volume={306},
  number={5695},
  pages={462--466},
  year={2004},
  publisher={American Association for the Advancement of Science}
}

@inproceedings{hardt2016strategic,
  title={Strategic classification},
  author={Hardt, Moritz and Megiddo, Nimrod and Papadimitriou, Christos and Wootters, Mary},
  booktitle={Proceedings of the 2016 ACM conference on innovations in theoretical computer science},
  pages={111--122},
  year={2016}
}

@inproceedings{levanon2021strategic,
  title={Strategic classification made practical},
  author={Levanon, Sagi and Rosenfeld, Nir},
  booktitle={International Conference on Machine Learning},
  pages={6243--6253},
  year={2021},
  organization={PMLR}
}

@article{zrnic2021leads,
  title={Who leads and who follows in strategic classification?},
  author={Zrnic, Tijana and Mazumdar, Eric and Sastry, Shankar and Jordan, Michael},
  journal={Advances in Neural Information Processing Systems},
  volume={34},
  pages={15257--15269},
  year={2021}
}

@article{zohar2008mechanisms,
  title={Mechanisms for information elicitation},
  author={Zohar, Aviv and Rosenschein, Jeffrey S},
  journal={Artificial Intelligence},
  volume={172},
  number={16-17},
  pages={1917--1939},
  year={2008},
  publisher={Elsevier}
}

@article{porter2008fault,
  title={Fault tolerant mechanism design},
  author={Porter, Ryan and Ronen, Amir and Shoham, Yoav and Tennenholtz, Moshe},
  journal={Artificial Intelligence},
  volume={172},
  number={15},
  pages={1783--1799},
  year={2008},
  publisher={Elsevier}
}

@article{meir2012algorithms,
  title={Algorithms for strategyproof classification},
  author={Meir, Reshef and Procaccia, Ariel D and Rosenschein, Jeffrey S},
  journal={Artificial Intelligence},
  volume={186},
  pages={123--156},
  year={2012},
  publisher={Elsevier}
}

@article{nisan2007introduction,
  title={Introduction to mechanism design (for computer scientists)},
  author={Nisan, Noam and others},
  journal={Algorithmic game theory},
  volume={9},
  pages={209--242},
  year={2007}
}

@article{kahneman2003maps,
  title={Maps of bounded rationality: Psychology for behavioral economics},
  author={Kahneman, Daniel},
  journal={American economic review},
  volume={93},
  number={5},
  pages={1449--1475},
  year={2003},
  publisher={American Economic Association}
}

@inproceedings{branzei2015verifiably,
  title={Verifiably truthful mechanisms},
  author={Br{\^a}nzei, Simina and Procaccia, Ariel D},
  booktitle={Proceedings of the 2015 Conference on Innovations in Theoretical Computer Science},
  pages={297--306},
  year={2015}
}

@book{selten1988reexamination,
  title={Reexamination of the perfectness concept for equilibrium points in extensive games},
  author={Selten, Reinhard and Bielefeld, R Selten},
  year={1988},
  publisher={Springer}
}

@article{li2017obviously,
  title={Obviously strategy-proof mechanisms},
  author={Li, Shengwu},
  journal={American Economic Review},
  volume={107},
  number={11},
  pages={3257--3287},
  year={2017},
  publisher={American Economic Association 2014 Broadway, Suite 305, Nashville, TN 37203}
}

@article{ashlagi2018stable,
  title={Stable matching mechanisms are not obviously strategy-proof},
  author={Ashlagi, Itai and Gonczarowski, Yannai A},
  journal={Journal of Economic Theory},
  volume={177},
  pages={405--425},
  year={2018},
  publisher={Elsevier}
}

@misc{ball2025probabilistic_verification,
      title={Probabilistic Verification in Mechanism Design}, 
      author={Ian Ball and Deniz Kattwinkel},
      year={2025},
      eprint={1908.05556},
      archivePrefix={arXiv},
      primaryClass={econ.TH},
      url={https://arxiv.org/abs/1908.05556}, 
}

@article{li2020mechanism,
  title={Mechanism design with costly verification and limited punishments},
  author={Li, Yunan},
  journal={Journal of Economic Theory},
  volume={186},
  pages={105000},
  year={2020},
  publisher={Elsevier}
}

@article{green1986partially,
  title={Partially verifiable information and mechanism design},
  author={Green, Jerry R and Laffont, Jean-Jacques},
  journal={The Review of Economic Studies},
  volume={53},
  number={3},
  pages={447--456},
  year={1986},
  publisher={Wiley-Blackwell}
}

@misc{IRS,
author= {IRS},
year = 2013,
title = {SOI Tax Stats},
note ={\url{ https://www.irs.gov/statistics/soi-tax-stats-individual-statistical-tables-by-tax-rate-and-income-percentile}}
}

@misc{FICO,
year = 2018,
author = {About-Rankings},
title = {Blog Post},
note = {\url{https://aboutranking.com/2018/01/21/how-a-credit-score-is-calculated-and-how-objective-it-is/}}
}

@article{mookherjee1989optimal,
  title={Optimal auditing, insurance, and redistribution},
  author={Mookherjee, Dilip and Png, Ivan},
  journal={The Quarterly Journal of Economics},
  volume={104},
  number={2},
  pages={399--415},
  year={1989},
  publisher={MIT Press}
}

@misc{wikipediaSAT,
  author       = {{Wikipedia contributors}},
  title        = {SAT --- Wikipedia{,} The Free Encyclopedia},
  year         = 2025,
  url          = {https://en.wikipedia.org/wiki/SAT},
  note         = {Accessed: 2025-07-28},
  howpublished = {\url{https://en.wikipedia.org/wiki/SAT}}
}

@inproceedings{estornell2021incentivizing,
  title={Incentivizing truthfulness through audits in strategic classification},
  author={Estornell, Andrew and Das, Sanmay and Vorobeychik, Yevgeniy},
  booktitle={Proceedings of the AAAI Conference on Artificial Intelligence},
  volume={35},
  number={6},
  pages={5347--5354},
  year={2021}
}

@inproceedings{ben2024principal,
  title={Principal-agent reward shaping in mdps},
  author={Ben-Porat, Omer and Mansour, Yishay and Moshkovitz, Michal and Taitler, Boaz},
  booktitle={Proceedings of the AAAI Conference on Artificial Intelligence},
  volume={38},
  number={9},
  pages={9502--9510},
  year={2024}
}

\full{
\clearpage
\onecolumn
\appendix
\section{Omitted Proofs of Section~\ref{sec:det}}

We first observe that
\begin{align*}
   p_t(t) & = \frac{n_t-1}{n-1}=\frac{n}{n-1}p(t)-\frac{1}{n-1}\\
         \forall t' \neq t,\quad p_t(t') & =\frac{n_{t'}}{n-1}=\frac{n}{n-1}p(t')
     \end{align*}
     where $n_t=|\{i \in N:t_i=t\}|$. 
     
\begin{rtheorem}{thm:AS-MCV efficiency}
    For every $\beta \in [0,1]$, and any type dist. $p$,
    \begin{enumerate}
        \item $\bias(\AM^{\beta},p) \leq \beta + \frac{1}{n}$
        \item $\ver(\AM^{\beta},p) \leq \Ver(\gamma(\beta,p),p)$.
    \end{enumerate}
    \end{rtheorem}
    In the proof, we ease the notation of $\gamma_i$ to $\gamma_t$, as in the truthful equilibrium, $\gamma_{i}$ is identical for all agents of (true) type $t_i=t$.   
 \begin{proof}
     We prove for the average error first. Define $\mathcal{D}:=\{t : t \leq \gamma_t\}$. $\mathcal{D}$ contains  all the types that incur distortion in $\AM^{\beta}$, assigned with a score $\gamma_t \geq t$. Thus,
     \[\bias(\AM^{\beta},p)=\sum_{t \in \mathcal{D}}p(t)(\gamma_t-t).\]

   By definition of $\gamma_i$, $\Bias(\gamma_i,p_{-i})\leq \frac{n}{n-1}\beta$ for all $i\in N$.
   Let $\gamma^*:=\max \{\gamma_t : t\in \mathcal{D}\} $ and $t^*$ the corresponding type. Then in particular, $\Bias(\gamma^*,p_{t^*})\leq \frac{n}{n-1}\beta$.
Calculating the error of $\AM^\beta$ over all types, and assuming $t^*\leq \gamma^*$:  
\begin{small}
\begin{align*}
    \bias&(\AM^{\beta},p) =\sum_{t \in \mathcal{D}}p(t)(\gamma_t-t) \leq \sum_{t \leq \gamma^*}p(t)(\gamma^*-t)\\  
     & = \sum_{\mathclap{\substack{t \leq \gamma^* \\ t\neq t^*}}}\frac{n-1}{n}p_{t^*}(t)(\gamma^*-t)
         + \Big(\frac{n-1}{n}p_{t^*}(t^*)+\frac{1}{n}\Big)(\gamma^*-t^*)\\
     & \leq \sum_{t \leq \gamma^*}\frac{n-1}{n}p_{t^*}(t)(\gamma^*-t)+\frac{1}{n}  = \frac{n-1}{n}\Bias(\gamma^*, p_{t^*})+\frac1n=\beta +\frac{1}{n}.
\end{align*}\end{small}
If $t^*>\gamma^*$, we get the same without the $+\frac1n$.
  
  As for expected verified fraction,
  \[\ver(\AM^{\beta},p)=\sum_{t \in \bar{\mathcal{D}}}p(t)\frac{t-\gamma_t}{t+\xi}\]
  where $\bar{\mathcal{D}}=\{t : t > \gamma_t\}$. If $t >\gamma_t$, then all $t'\leq \gamma_t$ differ from $t$ and thus $p_t(t')=\frac{n}{n-1}p(t')$. 
  
  \begin{align*}
      \gamma_{t} & =\max \Big \{\gamma \in [0,1]:\sum_{t' \leq \gamma}p_{t}(t')(\gamma-t') \leq \frac{n}{n-1}\beta \Big \}\\
      & =\max \Big \{\gamma \in [0,1]: \sum_{t' \leq \gamma}\frac{n}{n-1}p(t')(\gamma-t')\leq \frac{n}{n-1}\beta \Big \}\\
       & =\max \Big \{\gamma \in [0,1]: \sum_{t' \leq \gamma}p(t')(\gamma-t')\leq \beta \Big \}
       = \gamma(\beta,p).
  \end{align*}
  i.e. $t > \gamma_t=\gamma(\beta,p)$ for all $t \in \bar{\mathcal{D}}$. Hence, 
  \begin{align*}
      \ver&(\AM^{\beta},p)=\sum_{t \in \bar{\mathcal{D}}}p(t)\frac{t-\gamma_t}{t+\xi}\\
      &\leq \sum_{t > \gamma(\beta,p)}\!p(t)\frac{t-\gamma(\beta,p)}{t+\xi}=\Ver(\gamma(\beta,p),p). \qedhere
  \end{align*}
  
     \end{proof}

     \section{Omitted Proofs of Section~\ref{sec:noisy}}
\subsection{Na\"ive scoring rules fail}\label{apx:no_score}
We show why na\"ively applying a proper scoring rule would not work. Suppose e.g. that the type distribution is uniform, and we apply the proper scoring rule $R^2$, and grade $\hat t+Z$ with some constant $Z$ for unverified agents. 

\begin{proposition} The following holds:
    \begin{itemize}
        \item $q_t$ is weakly increasing in $t$, i.e. higher reported types are verified with higher probability.
        \item $q_0=1$. 
    \end{itemize}
\end{proposition}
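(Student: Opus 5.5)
The plan is to exploit the fact that, under the na\"ive scheme, the expected grade is affine in the true type, and then use the standard characterization of truthfulness for such utilities (monotonicity plus an envelope condition). For a type-$t$ agent reporting $\hat t$, write
\[
f(\hat t,t)=q_{\hat t}\bigl(2\hat t\,t-\hat t^2\bigr)+(1-q_{\hat t})(\hat t+Z)=a(\hat t)+b(\hat t)\,t,\qquad b(\hat t):=2q_{\hat t}\hat t ,
\]
using $E_{s\sim T}[R^2(\hat t,s)]=2\hat t t-\hat t^2$. Let $U(t):=f(t,t)=q_t t^2+(1-q_t)(t+Z)$ be the truthful expected grade.

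\textbf{Step 1 (monotonicity and envelope).} I will write \HR1 for two types $t,t'$ and add the two inequalities $f(t,t)\ge f(t',t)$ and $f(t',t')\ge f(t,t')$. This gives $(b(t)-b(t'))(t-t')\ge 0$, so $t\mapsto q_t t$ is nondecreasing. Since $U(t)=\max_{\hat t}f(\hat t,t)$ is a maximum of affine functions, it is convex, and wherever it is differentiable $U'(t)=b(t)=2q_t t$.

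\textbf{Step 2 (ODE for $q_t$).} Differentiating the explicit form of $U$ gives $U'(t)=q_t'(t^2-t-Z)+2q_t t+(1-q_t)$. Equating with $2q_t t$ yields
\[
q_t'=\frac{1-q_t}{t+Z-t^2}.
\]
\HR2' for unverified agents forces $Z\ge 0$, and $t-t^2\ge 0$ on $[0,1]$, so the denominator is nonnegative. Together with $q_t\le 1$ this gives $q_t'\ge 0$, which proves the first bullet.

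\textbf{Step 3 ($q_0=1$).} Integrating the ODE gives $1-q_t=(1-q_{t_0})\exp\bigl(-\int_{t_0}^t\frac{ds}{s+Z-s^2}\bigr)$. In the natural unbiased case $Z=0$ (grade $\hat t$ for unverified agents, which is the case the example is aimed at), the integrand is $\frac{1}{s(1-s)}$ and
\[
1-q_t=C\,\frac{1-t}{t}.
\]
This is unbounded as $t\to 0$ unless $C=0$. Since $q_t\in[0,1]$, we must have $C=0$, so $q_t\equiv 1$ and in particular $q_0=1$. For the uniform distribution, this means the scheme verifies everyone.

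\textbf{Main obstacle.} The hard part is rigor at Step 2: $q$ is not assumed differentiable. I would first try to replace the ODE by the integral envelope identity $U(t)=U(0)+\int_0^t 2q_s s\,ds$, which holds for convex $U$ without any differentiability of $q$. I would then derive the same conclusions from the fact that $(1-q_t)(t+Z-t^2)$ is monotone, avoiding derivatives of $q$ altogether. A second concern is that for $Z>0$ the boundary behavior at $t=0$ is less clear-cut. There I would fall back on comparing type~$0$'s truthful payoff $(1-q_0)Z$ with the deviation payoff of reporting a small $\hat t>0$.
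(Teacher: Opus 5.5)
Your argument for the first bullet is correct and takes a different route from the paper. The paper uses only the upward local deviation $t\to t+\varepsilon$, after an informal continuity argument, to get the one-sided bound $q'(t)\ge\frac{1-q(t)}{t-t^2+Z}$, which suffices for monotonicity. You instead use the global envelope characterization and obtain the exact equation $q_t'=\frac{1-q_t}{t+Z-t^2}$. That is cleaner, and it describes \emph{all} truthful $q$. For rigor, monotonicity of $(1-q_t)(t+Z-t^2)$ alone is not enough on $[\tfrac12,1]$, where $t+Z-t^2$ decreases. Use instead the exact identity $(1-q_t)(t+Z-t^2)=U(t)-t^2=U(0)-\int_0^t 2s(1-q_s)\,ds$. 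It makes $q$ absolutely continuous wherever $t+Z-t^2>0$ and gives the ODE almost everywhere.

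The genuine gap is the second bullet for $Z>0$, and no argument can close it, because your own ODE refutes the claim there. For $Z>0$ the equation is regular at $t=0$, and $1-q_t=(1-q_0)\exp\bigl(-\int_0^t\frac{ds}{s+Z-s^2}\bigr)$ is a bounded solution for \emph{every} $q_0\in[0,1]$. Each such solution makes $2q_t t$ nondecreasing and satisfies $U'=2q_t t$. For utilities affine in $t$ this is also sufficient for truthfulness. For example, $Z=1$ and $q_0=0$ give a truthful scheme with $q_t\le 1-e^{-0.86}\approx 0.58$ for all $t$. It even satisfies HR2', which needs only $q_t\le Z/(t-t^2+Z)$, a bound that is at least $0.8$ here.

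Your proposed fallback is exactly the paper's argument, and it cannot work. The type-$0$ constraint reads $(q_\varepsilon-q_0)Z+q_\varepsilon(\varepsilon+\varepsilon^2)\ge\varepsilon$, which holds to first order with equality along the ODE for any $q_0$. The paper extracts $q_0\ge\frac{1}{1+\varepsilon}$ only through the bound $g_0(\varepsilon)\ge\varepsilon+Z-q_0(\varepsilon+\varepsilon^2+Z)$. That bound requires $q_\varepsilon\le q_0$, the opposite of the first bullet.

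So the statement holds only for $Z=0$, where your Step 3 is the right proof. Even there, ``in particular $q_0=1$'' does not follow from an ODE valid only on $(0,1)$. With $Z=0$, the report $0$ earns grade $0$ whether verified or not, so $q_0$ is payoff-irrelevant. What you actually prove is $q_t=1$ for all $t\in(0,1)$, and that is what the conclusion that everyone must be verified needs. The value $q_0=1$ holds only as the continuous extension.
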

Note that this means the only way to apply the scoring rule is by verifying all agents. 
\begin{proof}
Note first that if $q_t$ is discontinuous at some $t'\in(0,1)$, then  $t'+Z \geq t' > R^2(t',t')$, but then there is a manipulation either at $t'+\eps$ to $t'$ or vice versa, where the agent can decrease her inspection probability and increase her grade. Thus $q(t)$ must be continuous.

We write $q(t)$ as a continuous function of $t$. Consider an arbitrary point $t$ where $q(t)<1$ and denote $\delta:=1-q(t)$. 

    The expected utility of an agent truthfully reporting type $t$ is 
    $$g_t=q(t) R^2(t,t) + (1-q(t))(t+Z)= q(t)t^2+(1-q(t))(t+Z)=(t+Z)- q(t) \cdot (t-t^2 + Z).$$
    If the agent reports type $t+\eps$, they get
    \begin{align*}
        g_t(t+\eps)&=q(t+\eps) R^2(t+\eps, t) + (1-q(t+\eps))(t+\eps+Z)=  q(t+\eps) (2t(t+\eps)-(t+\eps)^2) + (1-q(t+\eps))(t+\eps+Z)\\
        &=        q(t+\eps) (t^2 
        -\eps^2) + (1-q(t+\eps))(t+\eps+Z)\\
        &=t+\eps+Z - q(t+\eps)(t-t^2+\eps+\eps^2+Z).
    \end{align*}
    Since the agent is not allowed to gain from reporting $t+\eps$,
    \begin{align*}
        0&\leq g_t-g_t(t+\eps) = ((t+Z)- q(t) \cdot (t-t^2 + Z))-(t+\eps+Z - q(t+\eps)(t-t^2+\eps+\eps^2+Z))\\
        &= q(t+\eps)(t-t^2+\eps+\eps^2+Z) -  q(t) \cdot (t-t^2 + Z) -\eps \\
        &= (q(t+\eps)-q(t))(t-t^2+Z) +  q(t+\eps)(\eps+\eps^2) -\eps &\Rightarrow \\
        q(t+\eps) -q(t)& \geq \frac{\eps-(\eps+\eps^2)q(t+\eps)}{t-t^2+Z} = \frac{\eps(1-(1+\eps)q(t+\eps))}{t-t^2+Z}.
        \end{align*}
        Now, 
        \begin{align*}
            \frac{\partial q(t)}{t} & = \lim_{\eps\rightarrow 0}\frac{q(t+\eps)-q(t)}{\eps} \geq  \lim_{\eps\rightarrow 0} \frac{\eps(1-(1+\eps)q(t+\eps))}{(t-t^2+Z)\eps}\\
            &=  \lim_{\eps\rightarrow 0}\frac{(1-(1+\eps)q(t+\eps))}{t-t^2+Z}.
        \end{align*}

Recall $\delta=1-q(t)>0$. Then by continuity there is $\eps'$ s.t. $q(t')<1-\delta/2$ in the entire interval $[t,t+\eps']$. Let $\eps'':=\min\{\delta/2, \eps'$, then $q(t')<1-\eps''$ in $t'\in [t,t+\eps''] \subseteq [t,t+\eps']$.  

We continue:
   \begin{align*}
       \frac{\partial q(t)}{t}&\geq \lim_{\eps\rightarrow 0}\frac{1-(1+\eps)q(t+\eps)}{t-t^2+Z}>\frac{1-(1+\eps'')\sup_{t'\in[t,t+\eps'']}q(t')}{t-t^2+Z}\\
       &=\frac{1-(1+\eps'')(1-\eps'')}{t-t^2+Z}=\frac{1-(1-(\eps'')^2)}{t-t^2+Z}=\frac{(\eps'')^2}{t-t^2+Z}>0.
   \end{align*}
This means $q(t)$ is strictly increasing everywhere except where $q(t)=1$, and in particular weakly increasing. 

\medskip
Now, observe that  $g_0=Z -q(0)Z$; and
$$g_0(\eps) = \eps + Z - q(\eps)(\eps+\eps^2+Z)\geq \eps+Z+q(0)(\eps+\eps^2+Z)$$

in particular we get
\begin{align*}
0\leq g_0-g_0(\eps)\leq q(0)(\eps+\eps^2)-\eps=\eps(q(0)(1+\eps)-1) \Rightarrow q(0)\geq \frac{1}{1+\eps},
\end{align*}
so in the limit, $q(0)=1$. 
\end{proof}

\subsection{PV mechanism}
     \begin{rlemma}{lemma:PV_g}
    $\bar g_t(\hat t) = \frac{\ex^\ex}{(\ex+1)^{\ex+1}}+  R^{1+\frac{1}{\ex}}(\hat t,  t)$;
\end{rlemma}

\begin{proof}
     \begin{align*}
        \bar g_t(\hat t) &-\frac{\ex^\ex}{(\ex+1)^{\ex+1}} =  \PM^{\lin,\ex}_V(\hat t)E_{s\sim T}[\PM^{\lin,\ex}_G(\hat t, s)-\frac{\ex^\ex}{(\ex+1)^{\ex+1}}] \\
        &~~~+ (1-\PM^{\lin,\ex}_V(\hat t))(\PM^{\lin,\ex}_G(\hat t,\bot)-\frac{\ex^\ex}{(\ex+1)^{\ex+1}})\\
        &=\lin\cdot( \hat t)^{\frac{1}{\ex}}\left(\frac{1+\frac{1}{\ex}}{\lin}E[s]-\frac1{\ex\cdot \lin^{\ex+1}} \right)+ (1-\lin\hat t^{\frac{1}{\ex}})\frac{1}{\ex\lin^{\ex+1}}\sum_{\ell=1}^\ex (\lin\hat t^\frac{1}{\ex})^{\ell}\\  
   &=(\hat t)^{\frac{1}{\ex}}(1+\frac{1}{\ex})t-\frac{(\hat t)^{\frac{1}{\ex}}}{\ex\lin^\ex}+ \frac{1}{\ex\lin^{\ex+1}}\sum_{\ell=1}^\ex (\lin\hat t^\frac{1}{\ex})^{\ell} - \frac{1}{\ex\lin^{\ex+1}}\sum_{\ell=1}^\ex (\lin\hat t^\frac{1}{\ex})^{\ell+1} 
   \\
   &=(\hat t)^{\frac{1}{\ex}}(1+\frac{1}{\ex})t-\frac{(\hat t)^{\frac{1}{\ex}}}{\ex\lin^\ex}+ \frac{1}{\ex\lin^{\ex+1}} (\lin\hat t^\frac{1}{\ex})^{1} - \frac{1}{\ex\lin^{\ex+1}} (\lin\hat t^\frac{1}{\ex})^{\ex+1} \\
   &=(\hat t)^{\frac{1}{\ex}}(1+\frac{1}{\ex})t-\frac{(\hat t)^{\frac{1}{\ex}}}{\ex\lin^\ex}+ \frac{(\hat t)^{\frac{1}{\ex}}}{\ex\lin^\ex} - \frac{1}{\ex} (\hat t^\frac{1}{\ex})^{\ex+1} \\
    &=(1+\frac{1}{\ex})(\hat t)^{\frac{1}{\ex}}t - \frac{1}{\ex} \hat t^{1+\frac{1}{\ex}} 
   \end{align*}
   where the last equality is by setting $\alpha=1+\frac1\ex$ in Eq.~\eqref{eq:R_alpha}.
\end{proof}



\subsection{Histogram Mechanism}
\begin{rproposition}{prop:hist}
    The histogram mechanism $\HM^\alpha$ has a unique Nash equilibrium, in which all agents are truthful and none are verified. Moreover, for any truthful agent~$i$, $E[g_i]\in [t_i, t_i+\eps]$ even off-equilibrium. 
\end{rproposition}

\begin{proof}[Proof of Prop.~\ref{prop:hist}] 
    We first note that the grade range of truthful agents follows directly from Lemma~\ref{lemma:ex_bound} (as $\frac{\kappa^\kappa}{(1+\kappa)^{1+\kappa}}<\frac{1}{\kappa}\leq \eps$).
    As for limited liability, note that at any realization $s\sim T$ 
    $$R^{1+\frac1\kappa}(\hat t,s)=(1+\frac1\kappa)s(\hat t)^{\frac1\kappa}-\frac1\kappa(\hat t)^{1+\frac1\kappa}\geq -\frac1\kappa> -\frac{\eps}{2}>-\eps, $$
    so with the constant addition of $\eps$ the grade is never negative. 

    By the definition of the mechanism, when all agents are truthful then $\hat H=H$, there is no verification and no error in grades. Also, this is an equilibrium, since for any agent $g_i=\eps+t_i$, whereas by reporting any $\hat t_i\neq t_i$ the player will be verified (as $\hat H(\hat t_i)=H(\hat t_i)+1$) and get by truthfulness of proper scoring rules, 
    $$E[g'_i]=\eps+E_{s\sim T_i}[R^{1+\frac1\kappa}(\hat t_i,s)]\leq \eps+E_{s\sim T_i}[R^{1+\frac1\kappa}(t_i,s)]< \eps+ t_i .$$

    We next show that any other profile is not an equilibrium. There are two cases: Case~I is when some player $i$ is reporting $\hat t_i\leq t_i-\eps$. we will show that such an agent will be better off by resorting to truthful reporting, whether they are verified or not, i.e. lying downward is strictly dominated. 

    Indeed, the grade of $i$ when reporting $\hat t_i$ is 
    $$g^*_i:= E[g_i]\leq \eps+\hat t_i < \hat t_i + \eps/2 \leq (t_i-\eps) + \eps/2 < t_i - \eps/2.$$
    
    By resorting to truthful reporting, the agent will  get 
    $$E[g'_i] \in [t_i, \eps+t_i] > g^*_i.$$ 
    Thus $i$ is strictly losing by deviating from the truth, and this cannot be an equilibrium. 

    In the remaining case, all agents report $\hat t_i\geq t_i$, with at least one strict inequality.
    
    In particular, there is an agent that reports $\hat t_i> t_i$.\footnote{Note that we do not care about the magnitude of deviation in this case.} Let $j$ be an agent with maximal such over-report, and let $\ol t$ be this maximal over-reported type. 

   We first observe that $\ol t=t_V$, i.e. the reported type $\ol t$ is the one selected for verification. This holds since no agent with true type $t_i=\ol t$ would lie (cannot lie downward since this would fall under Case~I, and upwards by selection of $\ol t$). Therefore $j$ is added to all agents whose truthful type is $\ol t$, and $\hat H(\ol t)>H(\ol t)$. There is no agent with $t_i>\ol t$ falsely reporting $\hat t_i = \ol t$, since this again would fall under Case~I.

    By being verified, $j$ gets (by properties of a proper scoring rule):
    $$E[g_j|\hat t_j]=\eps+E_{s\sim T_j}[R^{1+\frac1\kappa}(\hat t_j,s)]<\eps+ E_{s\sim T_j}[R^{1+\frac1\kappa}(t_j,s)],$$
    whereas by resorting to truthful report, $j$ would get either $\eps+E_{s\sim T_j}[R^{1+\frac1\kappa}(t_j,s)]$ (if still verified), or $\eps+t_j$ (if not verified) which is in either case  higher than $E[g_j|\hat t_j]$.  This completes the proof.
\end{proof}

}{}

\end{document}